\def\dOi{9(4:18)2013}
\subjclass{F.4.1}
\renewcommand{\emptyset}{\varnothing}
\newcommand{\N}{\mathbb{N}}
\def \eqcard{{EqCard}} 
\newcommand{\I}{\mathcal{I}}
\newcommand{\C}{\mathcal{C}}
\def \cm{{\mathcal M}}
\begin{document}

\title[Expansions of MSO by cardinality relations]{Expansions of MSO by cardinality relations}

\author[Alexis B\`es]{Alexis B\`es}	
\address{Universit\'e Paris-Est, LACL (EA 4219),  UPEC, F-94010, Cr\'eteil, France}	
\email{bes@u-pec.fr}  





\keywords{Monadic second-order logic, decidability, definability, cardinality relations}


\begin{abstract}
  \noindent We study expansions of the Weak Monadic Second Order theory of $(\N,<)$ by {\em cardinality relations}, which are predicates $R(X_1,\dots,X_n)$ whose truth value depends only on the cardinality of the sets $X_1,\dots,X_n$.

We first provide a (definable) criterion for definability of a cardinality relation in $(\N,<)$,\break and use it to prove that for every cardinality relation $R$ which is not definable in $(\N,<)$,\break there exists a {\em unary} cardinality relation that is definable in $(\N,<,R)$ and not in $(\N,<)$. These results resemble Muchnik and Michaux-Villemaire theorems for Presburger Arithmetic.  We prove then that $+$ and $\times$ are definable in $(\N,<,R)$ for every cardinality relation $R$ which is not definable in $(\N,<)$.  This implies undecidability of the WMSO theory of $(\N,<,R)$. 

We also consider the related satisfiability problem for the class of finite orderings, namely the question whether an MSO sentence in the language $\{<,R\}$ admits a finite model $M$ where $<$ is interpreted as a linear ordering, and $R$ as the restriction of some (fixed) cardinality relation to the domain of $M$. We prove that this problem is undecidable for every cardinality relation $R$ which is not definable in $(\N,<)$.

\end{abstract}

\maketitle

\section*{Introduction}

The B\"uchi-Elgot-Trakhtenbrot Theorem \cite{Buchi60,Elgot61,Trakh61} states the equivalence between rational languages and relations definable in the Weak Monadic Second Order theory of the structure $(\N,<)$ (for short: $\mbox{WMSO}(\N,<)$), and yields decidability of this theory. As an easy consequence, the satisfiability problem for Monadic Second-Order (for short: MSO) logic over the class of finite orderings is also decidable. These results initiated the development of many logical formalisms for the specification and automatic verification of systems (see e.g. \cite{henriksen1995mona}). 

A natural issue is to add expressiveness to $\mbox{WMSO}(\N,<)$ while keeping good properties such as robustness and decidability.
B\"uchi proved that $\mbox{MSO}(\N,<)$ is also decidable \cite{Buchi62}. Since then, there have been many works on (un)decidable extensions of $\mbox{WMSO}(\N,<)$ and $\mbox{MSO}(\N,<)$. Let us state some classical examples.  It is known that $\mbox{WMSO}(\N,+)$, and even $\mbox{WMSO}(\N,<,x \mapsto 2x)$, 
are undecidable \cite{Robinson58}. Elgot and Rabin study in \cite{ElgotRabin66} the MSO theory of structures of the form $(\N,<,P)$, where $P$ is some unary predicate. They give a sufficient condition on $P$ which ensures decidability of the MSO theory of $(\N,<,P)$. In particular the condition holds when $P$ denotes the set of factorials, or the set of powers of any fixed integer. The frontier between decidability and undecidability of related theories was explored in numerous later papers (see for instance \cite{Rabinovich07a}). 

Another approach is to extend $\mbox{WMSO}(\N,<)$ with predicates which express cardinality relations between finite sets. One can consider for instance the equi-cardinality relation  $|X|=|Y|$, which we denote by $EqCard(X,Y)$. On the one hand, Feferman and Vaught proved \cite[Theorem 8.2]{FV59} that $\mbox{WMSO}(\N, EqCard)$ (without $<$) is decidable, by reduction to Presburger Arithmetic; the interested reader can find applications of this result to constraint databases \cite{Rev05} and verification \cite{KNR05}.  
On the other hand it is easy to show that $+$ and $\times$ are definable in $\mbox{WMSO}(\N, <, EqCard)$, from which we get undecidability of the theory (see Proposition \ref{prop:equi-undec}). Klaedtke and Ruess  \cite{KlaedtkeR03} extend the undecidability result to the case where $EqCard$ is replaced by any predicate $R(X_1,\dots,X_r,Y_1,\dots,Y_s)$ which holds iff $|X_1|+\dots + |X_r| <|Y_1|+ \dots + |Y_s|$. They also introduce the notion of {\em Parikh automaton}, which allows them to obtain decidability results for some syntactical fragment of the logic.

The above results suggest that one cannot add very expressive cardinality relations to $\mbox{WMSO}(\N,<)$ while keeping decidability of the theory. Bruno Courcelle suggested to consider the case of predicates $R_A(X)$ which hold iff $|X|$ belongs to some (fixed) recursive set $A \subseteq \N$. The study is interesting only if $A$ is not ultimately periodic, since otherwise $R_A$ is already definable in $(\N,<)$. For which such sets $A$ can we obtain decidability of $\mbox{WMSO}(\N,<,R_A)$, and decidability of the related satisfiability problem ?  In \cite[Section 7.5]{Cour12}  it was shown that if $A$ is the union of powers of 2 and powers of 3 then the satisfiability problem for the corresponding logic over the class of finite orderings is undecidable. The proof goes by interpreting grids. It follows that $\mbox{WMSO}(\N,<,R_A)$ is undecidable. The general case was left open. 

In this paper we show that these undecidability results hold for every set $A \subseteq \N$ which is not ultimately periodic. We prove actually that they hold for any non-trivial expansion of $\mbox{WMSO}(\N,<)$ by a predicate $R(X_1,\dots,X_n)$ whose truth value depends only on the cardinality of the sets  $X_1,\dots,X_n$. We call such predicates {\em cardinality relations}. The predicates $R_A$, $EqCard$, as well as Klaedtke-Ruess' predicates which we discussed above, are examples of cardinality relations. 

In Section \ref{sec:mv} we provide a characterization of cardinality relations which are definable in $\mbox{WMSO}(\N,<)$.
Then 
we prove in Section \ref{sec:exp} that for every $n$-ary cardinality relation $R$ which is not definable in $\mbox{WMSO}(\N,<)$, there exists a unary cardinality relation $R'$ that is definable in $\mbox{WMSO}(\N,<,R)$ and not in $\mbox{WMSO}(\N,<)$. We also prove that $+$ and $\times$ are  definable in $\mbox{WMSO}(\N,<,R')$ for every unary cardinality relation $R'$ which is not definable in $\mbox{WMSO}(\N,<)$. As a corollary we obtain undecidability of  $\mbox{WMSO}(\N,<,R)$ for every cardinality relation which is not definable in $\mbox{WMSO}(\N,<)$. 

In Section \ref{sec:fini} we turn to the finite-model-theoretic version of the problem, i.e. we consider the same logical formalism but we interpret MSO formulas over structures of the form $(D,<^D,R^D)$ where $(D,<^D)$ is a finite initial segment of $(\N,<)$ and $R^D$ is the restriction of some fixed cardinality relation $R$ to $D$. We prove that for every $n$-ary cardinality relation $R$, if  $R$ is not definable in $\mbox{WMSO}(\N,<)$ then the satisfiability problem for MSO logic over the signature $\{<,R\}$ is undecidable. This solves in particular Courcelle's  question for predicates $R_A$. The proof essentially consists in defining arithmetic on (arbitrarily great) initial segments of the domain. We also study some particular cases of cardinality relations $R$ for which we can define $+$ and $\times$.   

The results and techniques of the present paper (mainly Sections \ref{sec:mv} and \ref{sec:exp}) are very close in spirit to a series of definability and decidability results \cite{Muc03,MV96,Bes97a} related to Presburger Arithmetic, i.e. the FO theory of $(\N,+)$. We will discuss these connections at the end of Sections \ref{sec:mv} and \ref{sec:exp}.

We also note that the study of logics which allow to express cardinality properties by means of special cardinality quantifiers is a classical topic in model theory (see  \cite{kaufmann1985quantifier,Cour12}). For recent results related to MSO logic, we refer e.g. to \cite{DBLP:journals/jsyml/BaranyKR11}, and also \cite{Colc13} which proves decidability of some extension of MSO.

We assume that the reader has some familiarity with MSO logic and automata theory. In particular, when proving that some property is definable by an MSO formula, we will often describe the construction of the formula but omit its full formal definition. 

\section{Preliminaries}

We denote by $\mathcal F$ the set of finite subsets of $\N$. Given $X \in {\mathcal F}$, $|X|$ will denote the cardinality of $X$.

For every $r \in \N$, we denote by $\N^n_{\geq r}$ the set of $n$-tuples $(x_1,\dots,x_n) \in \N^n$ such that  $x_i \geq r$ for every $i$, and define $\N^n_{<r}$ as  $\N^n \setminus \N^n_{\geq r}$.

We say that $X \subseteq \N$ is ultimately periodic if there exist integer $M$ and $p$ with $p \geq 1$ such that for every integer $x\geq M$ we have ($x \in X$ iff $x+p \in X$).

Given $j,k \in \N$, we denote by $[j,k]$ (respectively $(j,k)$) the interval $[\min{(j,k)},\max{(j,k)}]$ (resp. $(\min{(j,k)},\max{(j,k)})$ ). We also use notations such as $[j,k)$ and $(j,k]$ which are defined in a similar way.

Let $S$ be a set, $n \geq 1$, and let $R \subseteq {S}^n$ be an $n$-ary relation over $S$. Given $i \in [1,n]$ and $C \in S$, we denote by $R_{i,C}$ the $(n-1)$-ary relation over $S$ obtained from $R$ by fixing the $i$-th component to $C$, i.e. 
$$R_{i,C}=\{(x_1,\dots, x_{i-1},x_{i+1},\dots,x_n) \ | \ (x_1,\dots, x_{i-1},C,x_{i+1},\dots,x_n) \in R \}.$$

The relations $R_{i,C}$ will be called {\em sections} of $R$.

\subsection{Logic}

Let us briefly recall useful elements of monadic second-order logic. For more details see e.g. \cite{Gurevich85,Thomas97a}. Monadic second-order logic is an extension of first-order logic  that allows to quantify over elements as well as subsets of the domain of the structure. Given a signature ${\mathcal L}$, one can define  
monadic second-order formulas over ${{\mathcal L}}$ (which we call $\mathcal L$-formulas) as well-formed formulas that can use first-order variable symbols $x,y,\dots$ interpreted as elements of the domain of the structure, monadic second-order variable symbols $X,Y,\dots$ interpreted as subsets of the domain, symbols from ${{\mathcal L}}$, and a new binary  predicate $x \in X$ interpreted as membership relation. A sentence is a formula without free variable. 

When the context is clear, we often identify logical symbols with their interpretation. Otherwise the interpretation of a symbol $R$ in a structure $\mathcal M$ will be denoted by $R^{\mathcal M}$.

 Given a signature ${{\mathcal L}}$ and an  ${{\mathcal L}}$-structure ${\mathcal M}$ with domain $D$, we say that a relation $R \subseteq D^m \times (2^D)^n$ is {MSO-definable} in ${\mathcal M}$ if there exists a ${\mathcal L}$-formula $\varphi(x_1,\dots,x_m,X_1,\dots,X_n)$ which holds in ${\mathcal M}$ if and only if $(x_1,\dots,x_m,X_1,\dots,X_n)$ is interpreted by an $(m+n)$-tuple of $R$.  Given a structure ${\mathcal M}$ we denote by $\mbox{MSO}({\mathcal M})$ (respectively $\mbox{FO}({\mathcal M})$) the monadic second-order (respectively first-order) theory of ${\mathcal M}$.

{\em Weak} Monadic Second Order Logic (for short: WMSO) is obtained by restricting the interpretation of second order monadic variables to {\em finite} subsets of the domain of the structure. The notions of WMSO-definability and WMSO theory are defined similarly as above. In Section \ref{sec:fini} we will deal only with interpretations of formulas over finite structures; obviously in this context the distinction between WMSO and MSO is meaningless. 

From now on, definable will mean WMSO-definable.

Given two signatures ${{\mathcal L}}$ and ${{\mathcal L}'}$ such that ${{\mathcal L}} \subsetneq {{\mathcal L}}'$, an $\mathcal L$-structure $\mathcal M$ and an ${\mathcal L}'$-structure ${\mathcal M}'$ over the same domain, we say that ${\mathcal M}'$ is an expansion of ${\mathcal M}$ if every symbol of $\mathcal L$ has the same interpretation in ${\mathcal M}$ and ${\mathcal M}'$. Moreover we say that ${\mathcal M}'$ is a non-trivial expansion of ${\mathcal M}$ if there exists some symbol of ${{\mathcal L}'}$ whose interpretation in ${\mathcal M}'$ is not definable in ${\mathcal M}$. 

\subsection{Logic and finite automata}

In this section we recall classical results on WMSO logic and finite automata, and fix some notations. 
B\"uchi, Elgot and Trakhtenbrot independently proved that relations definable in $\mbox{WMSO}(\N,<)$ correspond to rational languages. This correspondance relies on the encoding of (tuples of) elements of $\mathcal F$. We define the function $c: {\mathcal F} \rightarrow \{0,1\}^*$ which maps every set $X \in {\mathcal F}$ to the finite word $c(X)$ over the alphabet $\{0,1\}$ defined by 
\begin{itemize}
\item $c(\emptyset)=\varepsilon$
\item if $X \ne \emptyset$ then $c(X)$ is a word of length $l=1+ \max(X)$ such that for every $i \in [0,l)$, the $i$-the letter of $c(X)$ equals $1$ iff $i \in X$.  
\end{itemize}
For instance if $X=\{0,3,4\}$ then $c(X)=10011$.

We also need to deal with $n$-tuples of elements of $\mathcal F$ for every fixed $n \geq 1$. We encode every $n$-tuple $\overline{X}=(X_1,\dots,X_n)\in {\mathcal F}^n$ by adding a sufficient number of zeros to the words $c(X_1)\dots,c(X_n)$ in such a way that they have the same length, and then ``stack up" these $n$ words, from which we obtain a word over the alphabet 
 $\Sigma_n=\{0,1\}^n$ of $n$-tuples of elements of $\{0,1\}$. Formally we extend the definition of $c$ to a function $c: {\mathcal F}^n \rightarrow \Sigma_n^*$ which maps every $n$-tuple $\overline{X}=(X_1,\dots,X_n)\in {\mathcal F}^n$ to the word $c(\overline{X}) \in \Sigma_n^*$ with length $l=1+max(X_1 \cup \dots \cup X_n)$ such that for every $i \in [0,l)$ and every $j \in [1,n]$, the $j$-th component of the $i$-th letter of $c(\overline{X})$ equals $1$ iff $i \in X_j$. 
 
For instance if $n=2$, $X_1=\{0,3,4\}$ and $X_2=\{0,1,3\}$ then $c(X_1)=10011$, $c(X_2)=1101$, and $c(X_1,X_2)={1 \choose 1} {0 \choose 1} {0 \choose 0} {1 \choose 1} {1 \choose 0}$.

\begin{thm}[B\"uchi-Elgot-Trakhtenbrot Theorem \cite{Buchi60,Elgot61,Trakh61}]\label{thm:buchi}
\ 
\begin{enumerate}
\item 
Let $n \geq 1$ and $R \subseteq {\mathcal F}^n$. The relation $R$ is definable in $(\N,<)$ iff the language $L_R \subseteq \Sigma_n^*$ defined by 
$$L_R=\{c(\overline{X}) \ | \ \overline{X} \in R \}$$
is rational.
\item $\mbox{WMSO}(\N,<)$ is decidable.

\end{enumerate}

\end{thm}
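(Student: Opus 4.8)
The plan is to prove part~(1) by exhibiting the two effective translations between WMSO-formulas over $(\N,<)$ and finite automata over the alphabets $\Sigma_n$, and then to obtain part~(2) by combining the effectiveness of these translations with the decidability of emptiness for rational languages.

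For the direction from formulas to automata, I would argue by induction on the structure of a formula. It is convenient first to replace every first-order variable by a second-order variable carrying a ``singleton'' constraint, so that one only deals with set variables; the atomic formulas then reduce to membership $X_i\subseteq X_j$, the predicate $\mathrm{Sing}(X_i)$, and the order between singletons, and for each of these one writes down a small explicit automaton over $\Sigma_n$ reading $c(X_1,\dots,X_n)$. A conjunction is handled by a product automaton, a negation by determinization (and completion) followed by swapping final and non-final states, and $\exists X_n\,\psi$ by erasing the last component of the alphabet $\Sigma_n$ (a homomorphic image, hence rational). The last step needs some care, because the canonical encoding $c$ produces no trailing all-zero letter, whereas $c(X_1,\dots,X_{n-1})$ may be a strict prefix of $c(X_1,\dots,X_n)$; the clean fix is to make acceptance invariant under appending or deleting a trailing all-zero letter, so that projection behaves correctly. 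Every operation used is effective, so from $\varphi$ one computes an automaton recognizing $L_R$.

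For the converse, given a finite automaton $\mathcal{A}=(Q,\Sigma_n,q_0,\delta,F)$ with $Q=\{q_1,\dots,q_m\}$ recognizing $L_R$, I would write a WMSO-formula $\varphi_{\mathcal{A}}(X_1,\dots,X_n)$ that asserts the existence of an accepting run: quantify over sets $Y_1,\dots,Y_m$, require that they partition the set of positions $[0,\ell)$ with $\ell=1+\max(X_1\cup\dots\cup X_n)$ (treating the empty input separately), that the first position belongs to the $Y$-set of some $\delta$-successor of $q_0$ under the first letter, that consecutive positions respect $\delta$ (the letter at position $i$ being read off from whether $i\in X_j$ for each $j$), and that the last position lies in some $Y_k$ with $q_k\in F$. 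All the ingredients needed --- ``$i\in[0,\ell)$'', ``first position'', ``last position'', ``$i+1$'' and the letter tests --- are definable from $<$ and $\in$.

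Part~(2) is then immediate: from a sentence $\varphi$ the first construction yields effectively a finite automaton $\mathcal{A}_\varphi$ with $(\N,<)\models\varphi$ iff $L(\mathcal{A}_\varphi)\neq\emptyset$, and emptiness of a rational language is decidable. I expect the main obstacle to be not a single deep idea but the disciplined handling of the encoding conventions --- in particular the treatment of $\emptyset$ versus the empty word $\varepsilon$, and the trailing-zero mismatch introduced by projection --- so that the inductive invariant ``$\varphi$ and the automaton define, via $c$, the same relation'' is genuinely preserved at each step.
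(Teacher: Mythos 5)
Your proposal is correct: it is the standard automata--logic translation (singleton coding of first-order variables, product/complement/projection for the connectives and quantifiers, run-encoding by a tuple of set variables for the converse, and emptiness testing for decidability), including the genuinely necessary care about the trailing-zero mismatch between $c(X_1,\dots,X_{n-1})$ and the projection of $c(X_1,\dots,X_n)$, and about $\emptyset$ versus $\varepsilon$. The paper itself gives no proof of this theorem --- it is cited as the classical B\"uchi--Elgot--Trakhtenbrot result --- so there is nothing to compare against beyond noting that your argument is exactly the classical one.
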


\begin{exas}\label{ex:card}

\begin{itemize}
\item The relation $R_1(X)$ which holds iff $X=[0,q)$ for some integer $q \geq 1$ is  definable in $(\N,<)$. Indeed  $L_{R_1}$ corresponds to the rational expression $11^*$. 
\item The relation $R_2(X)$ which holds iff $|X|$ is even, is definable in $(\N,<)$. Indeed  $L_{R_2}$ is the set of words over $\{0,1\}$ which have an even number of $1'$s and do not end with a $0$, which is a rational language. More generally it is easy to check that for every ultimately periodic set $A$, the relation $R_A(X)$ which holds iff $|X| \in A$ is definable in $(\N,<)$.

\item The relation $R_3(X_1,X_2)$ which holds iff $|X_1|$ and $|X_2|$ have the same parity, is also definable in $(\N,<)$. Indeed $L_{R_3}$ is the set of words $w$ over the alphabet $\Sigma_2$ which do not end with ${0 \choose 0}$, and such that the number of ${1 \choose 0}$'s and  the number of ${0 \choose 1}$'s in $w$ have the same parity. This is a rational language.

\item The relation $\eqcard(X_1,X_2)$ which holds iff $|X_1|=|X_2|$, is not definable in $(\N,<)$. The language $L_{\eqcard}$ is the set of words $w$ over $\Sigma_2$  which have the same number of  ${1 \choose 0}$ and ${0 \choose 1}$, and do not end with ${0 \choose 0}$. This language is not rational. 
\end{itemize}

\end{exas}

\noindent In the sequel we will consider definability in expansions of $(\N,<)$. The following auxiliary relations and functions will be useful. The proof is easy and left out.

\begin{prop}
The following functions and relations are definable in $(\N,<)$:
\begin{itemize} 
\item the union (resp. intersection) of finite sets;
\item the relation $Interval(x,y,X)$ which holds if $X=[x,y]$; 
\item the relation $Consec(x,y,Z)$ which holds if $x$ and $y$ are consecutive elements of $Z$;
\item For every $k \in \N$, the predicate $CARD_k(X)$ which holds iff $|X|=k$; 
\item For every $k \in \N$, the predicate $CARDLESS_k(X)$ which holds iff $|X| < k$. 
\end{itemize}

\end{prop}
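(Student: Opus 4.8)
The plan is to write down, for each item, an explicit formula of WMSO over the signature $\{<\}$; in fact all five are first-order, so no appeal to Theorem~\ref{thm:buchi} is needed (alternatively one could check in each case that the associated language over the relevant $\Sigma_n$ is rational). Throughout I abbreviate $x \le y$ by $x < y \vee x = y$, using that equality of first-order variables is a logical symbol — or, if one prefers, is defined by $\neg(x<y)\wedge\neg(y<x)$ in any linear order.

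I would first handle union and intersection, which do not even use $<$: $Z = X \cup Y$ is defined by $\forall t\,(t \in Z \leftrightarrow (t \in X \vee t \in Y))$, and $Z = X \cap Y$ by the same formula with $\vee$ replaced by $\wedge$. Next, since in the paper's convention $[x,y]$ denotes $[\min(x,y),\max(x,y)]$, I would put
$$Interval(x,y,X) \;\equiv\; \forall t\,\bigl( t \in X \leftrightarrow ( (x \le t \wedge t \le y) \vee (y \le t \wedge t \le x) ) \bigr),$$
and for consecutive elements, say that $x$ and $y$ are distinct members of $Z$ with nothing of $Z$ strictly between them:
$$Consec(x,y,Z) \;\equiv\; x \in Z \wedge y \in Z \wedge x \ne y \wedge \neg\exists t\,\bigl( t \in Z \wedge ((x<t \wedge t<y) \vee (y<t \wedge t<x)) \bigr).$$

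Finally, for the cardinality predicates I would take $CARD_0(X) \equiv \forall t\,(t \notin X)$ and, for $k \ge 1$,
$$CARD_k(X) \;\equiv\; \exists x_1 \cdots \exists x_k\,\Bigl( \textstyle\bigwedge_{1 \le i < j \le k} x_i \ne x_j \ \wedge\ \bigwedge_{i=1}^{k} x_i \in X \ \wedge\ \forall t\,( t \in X \to \bigvee_{i=1}^{k} t = x_i ) \Bigr),$$
expressing that $X$ has exactly $k$ elements, and then set $CARDLESS_k(X) \equiv \bigvee_{j=0}^{k-1} CARD_j(X)$, which is equivalent to the negation of the statement that $X$ contains $k$ pairwise distinct elements. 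Each displayed formula is visibly a WMSO$(\{<\})$-formula, so all five relations and functions are definable in $(\N,<)$. There is no genuine obstacle here: the only points that warrant a moment's attention are the use of the unordered-interval convention in the first two displays and the bookkeeping of the $k(k-1)/2$ inequalities $x_i \ne x_j$ in the last one, both entirely routine — which is precisely why the authors leave the proof out.
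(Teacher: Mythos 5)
Your proposal is correct, and since the paper explicitly omits this proof as easy, your explicit first-order formulas are precisely the routine argument intended; each formula does what you claim (including the correct handling of the unordered-interval convention and the degenerate case $CARDLESS_0(X)\equiv\bot$). Nothing further is needed.
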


\section{WMSO-definability of cardinality relations in $(\N,<)$}\label{sec:mv}

In this section we introduce {\em cardinality relations} and provide a characterization of cardinality relations which are definable in $(\N,<)$.

\begin{defi}
Let $n \geq 1$, and let $R \subseteq {\mathcal F}^n$. We say that $R$ is a {\em cardinality relation} if for 
all sets $X_1,\dots,X_n,Y_1,\dots,Y_n \in {\mathcal F}$,  if $|Y_i|=|X_i|$ for every $i \in [1,n]$, then we have 
$(X_1,\dots,X_n) \in R$ iff $(Y_1,\dots,Y_n) \in R$.
\end{defi}

For instance the relation $R_1$ of Example  \ref{ex:card} is not a cardinality relation, while $R_2$, $R_3$ and $\eqcard$ are.

We can associate to every $n$-ary cardinality relation $R$ the relation $\I(R)\subseteq {\N}^n$ defined by
$$\I(R)=\{(|X_1|,\dots,|X_n|) \ | \ (X_1,\dots,X_n) \in {R}\}.$$
Conversely, given  $S \subseteq {\N}^n$ we denote by $\C(S)$ the unique cardinality relation $R \subseteq {\mathcal F}^n$ such that $\I(R)=S$.

Given $n \geq 1$ and $S \subseteq \N^n$, we say that $S$ is {\em Card-definable} in a structure $\cm$ with domain $\N$ if $\C(S)$ is definable in $\cm$.
Given $k,l \geq 1$ and $f: \N^k \to \N^l$, we say that $f$ is {Card-definable} in $\cm$ if the graph of $f$ is {Card-definable} in $\cm$.

Our characterization of cardinality relations $R$ which are definable in $(\N,<)$ relies on periodicity conditions on $\I(R)$. We need the following definition.

\begin{defi}\label{def:strup}
Let $n \geq 1$, $S \subseteq \N^n$, and $\mu=(m,p_1,\dots,p_n)$ be an element of $\N^{n+1}$ such that $p_i \geq 1$ for every $i \in [1,n]$.

 We say that $S$ is {\em $\mu$-strongly ultimately periodic} (for short: $\mu$-STRUP)
 if for every $n$-tuple $(x_1,\dots,x_n) \in \N^n$ and every $j \in [1,n]$, if $x_i \geq m$ for every $i \in [1,n]$ then we have $(x_1,\dots,x_n) \in S$ iff $(x_1,\dots,x_{j-1},x_j+p_j,x_{j+1},\dots,x_n)  \in S$.

We say that $S$ is {\em strongly ultimately periodic}  (for short: STRUP) if there exists $\mu$ such that $S$ is $\mu$-STRUP.

\end{defi}

\begin{rem}
Observe that for $n=1$, $S$ is STRUP iff $S$ is ultimately periodic. 

\end{rem}

\begin{exas}\label{ex:strup}
\

\begin{itemize}
\item We consider the relations $R_2, R_3$ and $EqCard$ of Examples \ref{ex:card}. The relations $\I(R_2)=\{2x \ | \ x \geq 0\}$ and $\I(R_3)=\{(x,y) \ | \ x \equiv y \pmod 2\}$ are STRUP, while the relation $\I(EqCard)=\{(x,x) \ | \  x \geq 0\}$ is not STRUP.
\item Consider the relation $R_4(X,Y)$ which holds iff ($|X|$ is prime and $|Y| \leq 3$). Then the relation $\I(R_4)$ is $\mu$-STRUP with $\mu=(4,1,1)$. 

\end{itemize}
\end{exas}

\noindent The following theorem provides a characterization of cardinality relations which are definable in $(\N,<)$. 

\begin{thm}\label{thm:caract}
Let $n \geq 1$, and let $R \subseteq {\mathcal F}^n$  be a cardinality relation. The following properties are equivalent:

\begin{enumerate}
\item $R$ is definable in $(\N,<)$;
\item $\I(R)$ is a finite union of sets of the form $E_1 \times \dots \times E_n$, where each $E_i$ is an  ultimately periodic subset of $\N$ (i.e., $\I(R)$ is a recognizable subset of $\N^n$);
\item The two following properties hold:
\begin{enumerate}
\item every section of $\I(R)$ is Card-definable in $(\N,<)$;
\item $\I(R)$ is STRUP.

\end{enumerate}
\end{enumerate}

\end{thm}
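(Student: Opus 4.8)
The plan is to prove the equivalences by the cycle $(1)\Rightarrow(3)\Rightarrow(2)\Rightarrow(1)$, arguing by induction on $n$ and using the statement in dimension $n-1$ inside the inductive step; the base case $n=1$ reduces, via Theorem~\ref{thm:buchi} and the third item of Example~\ref{ex:card}, to the fact that the set of lengths of a rational subset of $1^{*}$ is ultimately periodic. The implication $(2)\Rightarrow(1)$ is immediate: if $\I(R)=\bigcup_{j}E_{1}^{j}\times\cdots\times E_{n}^{j}$ with each $E_{i}^{j}$ ultimately periodic, then $R$ is the finite disjunction over $j$ of the conjunctions $\bigwedge_{i}$ ``$|X_{i}|\in E_{i}^{j}$'', each conjunct being definable in $(\N,<)$ by Example~\ref{ex:card}.

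For $(1)\Rightarrow(3)$, assume $R$ is defined by an $\{<\}$-formula $\varphi(X_{1},\dots,X_{n})$. Condition $(3a)$ is easy: for $|C|=k$ the section $R_{i,C}$ is defined by $\exists Z\,(CARD_{k}(Z)\wedge\varphi(X_{1},\dots,X_{i-1},Z,X_{i+1},\dots,X_{n}))$, and since $R_{i,C}=\C(\I(R)_{i,k})$ this means every section of $\I(R)$ is Card-definable. For $(3b)$, by Theorem~\ref{thm:buchi} the language $L_{R}$ is rational; fix a deterministic automaton for it with state set $Q$, $N=|Q|$. Because $R$ is a cardinality relation, for a tuple $(k_{1},\dots,k_{n})$ with all $k_{i}\ge1$ one may test whether $(k_{1},\dots,k_{n})\in\I(R)$ on the encoding of the sets $[0,k_{1})$, $[k_{1},k_{1}+k_{2})$, \dots\ placed consecutively, which is the word $e_{1}^{k_{1}}e_{2}^{k_{2}}\cdots e_{n}^{k_{n}}$ where $e_{i}\in\Sigma_{n}$ carries a $1$ exactly on track $i$. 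A uniform pumping argument on the $e_{j}$-blocks --- the state reached after $e_{1}^{k_{1}}\cdots e_{j}^{k_{j}}$ is eventually periodic in $k_{j}$ with threshold and period at most $N$, uniformly in $k_{j+1},\dots,k_{n}$ because the rest of the computation influences acceptance only through one of the finitely many maps $Q\to\{0,1\}$ --- shows that $\I(R)\cap\N^{n}_{\ge N}$ is $(N,p,\dots,p)$-STRUP with $p=\mathrm{lcm}(1,\dots,N)$, giving $(3b)$.

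For $(3)\Rightarrow(2)$, fix $\mu=(m,p_{1},\dots,p_{n})$ witnessing STRUP, with $m\ge1$, and split $\N^{n}$ into the interior $\N^{n}_{\ge m}$ and the boundary $\N^{n}_{<m}$. On the interior, STRUP implies that membership in $\I(R)$ depends only on the residues of the entries modulo $p_{1},\dots,p_{n}$, so $\I(R)\cap\N^{n}_{\ge m}$ is a finite union of products of sets of the form $\{x\ge m:x\equiv r\ (\mathrm{mod}\ p_{i})\}$, each ultimately periodic; hence it is recognizable. The boundary decomposes as $\N^{n}_{<m}=\bigcup_{i=1}^{n}\bigcup_{k<m}\{\overline{x}:x_{i}=k\}$, and $\I(R)\cap\{\overline{x}:x_{i}=k\}$ is the copy of $\I(R)_{i,k}$ inside the hyperplane $x_{i}=k$; by $(3a)$ and the induction hypothesis $(1)\Rightarrow(2)$ in dimension $n-1$ each $\I(R)_{i,k}$ is recognizable, hence so is this copy. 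A finite union of recognizable sets being recognizable, $\I(R)$ is recognizable, which is $(2)$.

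I expect the main obstacle to be the uniform pumping argument in $(1)\Rightarrow(3b)$ --- in effect the lemma that a rational subset of $e_{1}^{*}\cdots e_{n}^{*}$, for distinct letters $e_{i}$, has a recognizable set of exponent-tuples: one must ensure that the threshold and period attached to coordinate $j$ do not depend on the other coordinates, which forces one to track the finitely many ``residual acceptance tests'' $Q\to\{0,1\}$ that a suffix $e_{j+1}^{k_{j+1}}\cdots e_{n}^{k_{n}}$ can induce. The interior/boundary bookkeeping and the identification of sections of $\I(R)$ with lower-dimensional instances of the theorem are routine but need to be set up carefully.
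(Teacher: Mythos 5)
Your proof is correct, and it rests on the same two devices as the paper's: testing membership of $(x_1,\dots,x_n)$ in $\I(R)$ on the word $a_1^{x_1}\cdots a_n^{x_n}$ obtained from consecutive intervals (legitimate precisely because $R$ is a cardinality relation), and splitting $\N^n$ into the interior $\N^n_{\geq m}$ and the boundary $\N^n_{<m}$. The organization differs. The paper proves $(1)\Rightarrow(2)$ by observing that $L_R\cap a_1^*\cdots a_n^*$ is rational and quoting (as ``easy to check'') the structure theorem that rational subsets of $a_1^*\cdots a_n^*$ are finite unions of $a_1^{E_1}\cdots a_n^{E_n}$ with each $E_i$ ultimately periodic; it leaves $(2)\Rightarrow(3)$ to the reader; and it proves $(3)\Rightarrow(1)$ \emph{directly}, with no induction: the boundary is a finite union of sets ``$x_i=C$ and the rest lies in $[\I(R)]_{i,C}$'', and $(3a)$ already hands you Card-definability of each section in $(\N,<)$, while the interior is decomposed into residue classes $(m+j_1+k_1p_1,\dots,m+j_n+k_np_n)$ exactly as in your interior argument. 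You instead run the cycle $(1)\Rightarrow(3)\Rightarrow(2)\Rightarrow(1)$. Your direct pumping proof of $(1)\Rightarrow(3b)$ is a correct unwinding of the structure theorem the paper cites, and the uniformity worry you flag dissolves once the automaton is taken deterministic: equal states after the prefix $a_1^{k_1}\cdots a_j^{k_j}$ and $a_1^{k_1}\cdots a_j^{k_j+p}$ force equal acceptance for every common suffix, so tail length $<N$ and period $p=\mathrm{lcm}(1,\dots,N)$ work for all starting states at once. The price of targeting $(2)$ rather than $(1)$ when leaving $(3)$ is the induction on $n$: condition $(3a)$ gives \emph{definability} of the sections, and you must invoke the lower-dimensional $(1)\Rightarrow(2)$ to upgrade that to recognizability before reassembling the boundary. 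Both routes are sound; the paper's is marginally more economical, while yours makes explicit the rational-subset decomposition that the paper treats as folklore.
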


\noindent Observe that item $(3)(a)$ could have been replaced by the property that every section of $R$ is definable in $(\N,<)$. This comes from the fact that $R$ being a cardinality relation, every section $[\I(R)]_{i,C}$ of $\I(R)$ equals $\I(R_{i,X})$ for any $X$ such that $|X|=C$.

\begin{exas}\label{ex:cardrel2222}

\begin{itemize} 
\item The relations $R_2$ and $R_3$ from Examples \ref{ex:card} satisfy items $3(a)$ and $3(b)$.
\item The relation $EqCard$ satisfies item $3(a)$ but not $3(b)$.
\item The relation $R_4$ from Example \ref{ex:strup} satisfies  $3(b)$, but not $3(a)$ since the section $[{\mathcal I}(R_4)]_{2,0}$ is the set of primes, which is not ultimately periodic. 
\end{itemize}

\end{exas}

\begin{proof}
The fact that $(2)$ implies $(1)$ is a straightforward consequence of the fact that every ultimately periodic subset $E$ of $\N$ is Card-definable in $(\N,<)$ (see Example \ref{ex:card}).

The proof that $(2)$ implies $(3)$ is easy and left out.

Let us prove that $(1)$ implies $(2)$. For every $(x_1,\dots,x_n) \in \N^n$, we have $(x_1,\dots,x_n) \in \I(R)$ iff the $n$-tuple

\centerline{$\overline{X}=([0,x_1),[x_1,x_1+x_2),\dots,[\sum_{1 \leq i <n} x_i, \sum_{1 \leq i \leq n} x_i))$}

belongs to $R$, which is in turn equivalent to   $c(\overline{X}) \in L_R$. Now we have $c(\overline{X})=a_1^{x_1} \dots a_n^{x_n}$ where for every $i \in [1,n]$, $a_i$ denotes the symbol $(b_1,\dots,b_n) \in  \Sigma_n$ such that $b_i=1$, and 
$b_j=0$ for every $j \ne i$. Hence we proved that $(x_1,\dots,x_n) \in \I(R)$ iff $a_1^{x_1} \dots a_n^{x_n}$ belongs to $L_R$, or equivalently to $L'_R=L_R \cap a_1^*\dots a_n^*$. Now $L_R$ is rational by Theorem \ref{thm:buchi}, thus $L'_R$ is a rational subset of $a_1^*\dots a_n^*$, and it is easy to check that every rational subset of $a_1^*\dots a_n^*$ is a finite union of sets of the form  $a_1^{E_1}\dots a_n^{E_n}$ where every $E_i$ is an ultimately periodic subset of $\N$. This yields $(2)$.

Finally we  prove that $(3)$ implies $(1)$. By $(3)(b)$, the set $\I(R)$ is $\mu$-STRUP for some $\mu=(m,p_1,\dots,p_n)$. 
 We have to prove that $R$ is definable in $(\N,<)$, i.e. that $\I(R)$ is Card-definable in $(\N,<)$. It suffices to prove that both sets $A^-=\I(R) \cap \N^n_{<m}$ and $A^+=\I(R) \cap \N^n_{ \geq m}$ are Card-definable in $(\N,<)$. 

By definition, we have $(x_1,\dots,x_n) \in A^-$ iff there exists $i \in [1,n]$ and $C<m$ such that $x_i=C$ and $(x_1,\dots,x_{i-1},x_{i+1},\dots,x_n) \in [\I(R)]_{i,C}$. By $(3)(a)$ each section $[\I(R)]_{i,C}$ is Card-definable in $(\N,<)$, thus the same holds for $A^-$.

Consider now the set $A^+$. Let
$$T=\{(j_1,\dots,j_n) \ | \ 0 \leq j_i <p_i \textnormal{ for every $i$, and }  (m+j_1,\dots,m+j_n) \in \I(R)  \}.$$
By definition of $\mu$ and $T$ we have
$$A^+=\bigcup_{(j_1,\dots,j_n) \in T} \{(m+j_1+k_1p_1,\dots,m+j_n+k_np_n) \ | \ k_1,\dots,k_n \geq 0 \}$$
and one can check that each set which appears in the union above is Card-definable in $(\N,<)$ (using again the fact that every ultimately periodic subset $E$ of $\N$ is Card-definable in $(\N,<)$), thus the same holds for $A^+$. 
\end{proof}

\

The above theorem provides a criterion for definability of a cardinality relation $R$ in $(\N,<)$. Actually this criterion is expressible with the signature $\{<,R\}$, i.e. we can find a sentence (independent of $R$) which holds in $(\N,<,R)$ iff $R$ is definable in $(\N,<)$. In order to state this property precisely, we need to distinguish (momentarily) between a relational symbol $R$ and its interpretation $R^{\mathcal M}$ in some structure $\mathcal M$.

\begin{prop}\label{prop:autodef}
Let $\{R_n(X_1,\dots,X_n) \ | \ n \geq 1\}$ denote a set of relational symbols. 
For every $n \geq 1$ 
there exists a $\{<,R_n\}$-sentence $\psi_n$   such that for every $\{<,R_n\}$-structure ${\mathcal M}=(\N,<,R_n^{\mathcal M}) $ where $R_n^{\mathcal M}$ is a cardinality relation, we have ${\mathcal M} \models \psi_n$ iff  $R_n^{\mathcal M}$ is definable in $(\N,<)$.

\end{prop}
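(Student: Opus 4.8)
The plan is to turn the semantic characterization of Theorem~\ref{thm:caract} into a single $\{<,R_n\}$-sentence by expressing conditions $(3)(a)$ and $(3)(b)$ uniformly. The key observation is that, because $R_n^{\mathcal M}$ is assumed to be a cardinality relation, statements about $\I(R_n^{\mathcal M})$ can be phrased directly in $(\N,<,R_n)$ by replacing a cardinality parameter $x_i$ with any finite set $X_i$ of that cardinality; the predicates $CARD_k$, $CARDLESS_k$, $Interval$ and the like from the preliminary proposition let us manipulate such sets inside MSO. So the whole argument is really about how to quantify over the parameters $m$ and $p_1,\dots,p_n$ that witness STRUP-ness, and over the finitely many sections needed for $(3)(a)$, using only first- and monadic-second-order quantifiers.

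First I would handle the STRUP condition $(3)(b)$. Given finite sets $M$ and $P_1,\dots,P_n$ playing the roles of the integers $m,p_1,\dots,p_n$ (with each $|P_i|\geq 1$), the property ``$\I(R_n^{\mathcal M})$ is $(\,|M|,|P_1|,\dots,|P_n|\,)$-STRUP'' unfolds to: for all finite sets $X_1,\dots,X_n$ with $|X_i|\geq|M|$ for all $i$, and every $j$, one has $R_n(X_1,\dots,X_n)\leftrightarrow R_n(X_1,\dots,X_{j-1},X_j',X_{j+1},\dots,X_n)$ where $|X_j'|=|X_j|+|P_j|$. The inequality $|X_i|\ge|M|$ is MSO-expressible (there is an injection, i.e.\ an order-preserving map, of $M$ into $X_i$; or more simply use the definable union/interval machinery to compare cardinalities), and ``$|X_j'|=|X_j|+|P_j|$'' is expressible by taking $X_j'$ to contain $X_j$ together with $|P_j|$ fresh points above $\max(X_j)$ --- again definable using $Interval$ and disjointness. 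Then ``$\I(R_n^{\mathcal M})$ is STRUP'' becomes $\exists M\,\exists P_1\cdots\exists P_n\,(\,\bigwedge_i P_i\neq\emptyset\;\wedge\;\text{the above}\,)$.

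Next, condition $(3)(a)$: every section of $\I(R_n^{\mathcal M})$ is Card-definable in $(\N,<)$, equivalently (by the remark after the theorem) every section $[\I(R_n^{\mathcal M})]_{i,C}$ is a recognizable subset of $\N^{n-1}$, i.e.\ a finite union of products of ultimately periodic sets. The difficulty is that this is a statement about each section being ultimately periodic in a suitable sense, and there are infinitely many sections $C=0,1,2,\dots$. The way around it is to note that once $(3)(b)$ holds with parameter $\mu=(m,p_1,\dots,p_n)$, the sections $[\I(R_n^{\mathcal M})]_{i,C}$ for $C\ge m$ are all STRUP with a uniform modulus (namely $(m,p_1,\dots,\widehat{p_i},\dots,p_n)$), so among these only the residues $C\bmod p_i$ matter, hence only finitely many $C$ need to be inspected --- and then by induction on arity $n$ the property ``every section is Card-definable'' reduces to the Card-definability criterion one dimension down, which is itself expressible by the (already constructed) sentence $\psi_{n-1}$ applied to each relevant section. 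This gives a recursion: build $\psi_n$ from $\psi_{n-1}$, with base case $n=1$ where $(3)$ just says $\I(R_1^{\mathcal M})$ is ultimately periodic, directly expressible as $\exists M\,\exists P\,(P\neq\emptyset\wedge\forall X(|X|\ge|M|\rightarrow(R_1(X)\leftrightarrow R_1(X'))))$ with $|X'|=|X|+|P|$.

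The main obstacle I expect is exactly the handling of $(3)(a)$ at the inductive step: one must argue carefully that ``all sections are Card-definable'' is not merely an infinite conjunction but can, thanks to STRUP-ness, be compressed into a bounded quantification over residue classes, and that plugging finite sets of the chosen cardinalities into $\psi_{n-1}$ faithfully expresses Card-definability of each section. Everything else --- comparing cardinalities, adding a fixed cardinality, extracting sections by substituting a set of given size into $R_n$ --- is routine MSO bookkeeping using the definable predicates from the preliminaries, and the correctness of the resulting $\psi_n$ is immediate from Theorem~\ref{thm:caract}.
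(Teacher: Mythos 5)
Your treatment of condition $(3)(b)$ matches the paper's: existentially quantify over finite sets $Y,P_1,\dots,P_n$ standing for $m,p_1,\dots,p_n$ and unfold the STRUP condition using disjoint unions to realize ``$|X_j|+|P_j|$''; this part is fine. The gap is in your treatment of $(3)(a)$. You identify the infinitude of sections as the main obstacle and propose to compress it, via STRUP-ness, to finitely many residue classes $C \bmod p_i$. This step is both unnecessary and incorrect. It is incorrect for two reasons. First, $\mu$-STRUP-ness only forces the sections $[\I(R)]_{i,C}$ and $[\I(R)]_{i,C+p_i}$ (for $C\geq m$) to agree on tuples whose remaining coordinates are all $\geq m$; they may differ arbitrarily where some coordinate is $<m$, and for $n-1\geq 2$ that low region contains infinite slabs (e.g.\ $\{0\}\times\N^{n-2}$) on which non-recognizable behaviour can live, so two sections in the same residue class can have different definability status. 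Second, even if the compression held, the modulus $p_i=|P_i|$ is the cardinality of an existentially quantified set, not a fixed constant, so ``a finite conjunction over the residues $0,\dots,p_i-1$'' cannot be written down as a single fixed formula; note also that a section being STRUP is only half of the criterion for that section to be Card-definable, so knowing the sections are uniformly STRUP does not by itself reduce the problem.

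The step is unnecessary because monadic second-order quantification already ranges over all section parameters: the paper simply writes $\psi_n^a$ as $\bigwedge_{1\leq i\leq n}\forall Z_i\,\phi_{i,n-1}(Z_i)$, where $\phi_{i,n-1}(Z_i)$ is obtained from the inductively constructed $\psi_{n-1}$ by replacing each atom $R_{n-1}(T_1,\dots,T_{n-1})$ with $R_n(T_1,\dots,T_{i-1},Z_i,T_i,\dots,T_{n-1})$. Since every $C\in\N$ arises as $|Z_i|$ for some finite set $Z_i$, and each such substituted section is itself a cardinality relation, the induction hypothesis applies to it directly and the universal quantifier $\forall Z_i$ expresses ``every section is definable in $(\N,<)$'' with no finiteness argument needed. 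Replacing your residue-class compression by this universal second-order quantification repairs the proof; the rest of your bookkeeping (encoding cardinality comparisons and additions by set inclusion and disjoint union) is sound.
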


\begin{proof}  We rely on item $(3)$ of Theorem \ref{thm:caract}.
For every $n \geq 1$ we define the sentence $\psi_n$ as $\psi_n^a \wedge \psi_n^b$ where $\psi_n^a$ and $\psi_n^b$ express conditions $3(a)$ and $3(b)$, respectively. The proof goes by induction over $n$.

We start with the sentence $\psi_n^b$, which can actually be constructed for each $n$ without relying on the induction hypothesis.

Indeed consider the $\{<,R_n\}$-formula 

$FSTRUP_n(Y,P_1,\dots,P_n)$:
$$ ( \bigwedge_{1 \leq i  \leq n} Y \cap P_i = \emptyset \wedge  \bigwedge_{1 \leq i  \leq n} P_i
 \ne \emptyset \wedge $$
 $$\forall X_1 \dots \forall X_n ((\bigwedge_{1 \leq i  \leq n} (X_i \cap P_i = \emptyset \wedge Y \subseteq X_i)) \rightarrow $$
$$ \ \ \bigwedge_{1 \leq j  \leq n} (R_n(X_1,\dots,X_n) \leftrightarrow R_n(X_1, \dots, X_{j-1}, X_j \cup P_j,  X_{j+1},\dots,X_n))).$$
This formula expresses, on the one hand, that all sets $P_i$ are nonempty and disjoint from $Y$, and on the other hand that $\I(R_n^{\mathcal M})$ is $\mu$-STRUP with $\mu=(|Y|,|P_1|,\dots,|P_n|)$.

Then it is clear that $\I(R_n^{\mathcal M})$ is STRUP (i.e. satisfies $(3)(b)$) iff the sentence 
$$\psi_n^b: \ \exists Y \exists P_1 \dots \exists P_n \ FSTRUP(Y,P_1,\dots,P_n)$$ holds in ${\mathcal M}$. 

We now turn to the construction of $\psi_n$ by induction over $n$. We only have to construct the sentence $\psi_n^a$. The case $n=1$ is straightforward since in this case property $(3)(a)$ is always true (a section of a unary relation is always Card-definable in $(\N,<)$). Now assume that $n \geq 2$ and the claim holds for every $n'<n$, and consider an $n$-ary relational symbol $R_n(X_1,\dots,X_n)$. Condition $(3)(a)$ can be expressed by saying that each section of $R_n^{\mathcal M}$ is definable in $(\N,<)$. This can be done by a slight modification of the sentence $\psi_{n-1}$, which existence is ensured by our induction hypothesis. More precisely we define $\psi_n^a$ as
$$\bigwedge_{1\leq i \leq n} \forall Z_i \ {\phi}_{i,{n-1}}(Z_i)$$
where for every $i$, the formula $\phi_{i,n-1}(Z_i)$ is constructed from $\psi_{n-1}$ by replacing in $\psi_{n-1}$ every occurrence of formulas of the form $R_{n-1}(T_1,\dots,T_{n-1})$ (where each $T_j$ denotes a monadic second-order variable) by $R_n(T_1,\dots,T_{i-1},Z_i,T_{i},\dots,T_{n-1})$. 
\end{proof}

\begin{rem}

Theorem \ref{thm:caract} and Proposition \ref{prop:autodef} can be seen as variations on Muchnik's results \cite{Muc03} on expansions of Presburger Arithmetic.
Indeed Theorem \ref{thm:caract} resembles Theorem 1 of \cite{Muc03}, which states that a relation $R \subseteq \N^n$ is FO definable in $(\N,+)$ iff all sections of $R$ are FO definable in $(\N,+)$, and moreover $R$ satisfies some additional periodicity conditions. And Proposition \ref{prop:autodef} is the counterpart of Theorem 2 of \cite{Muc03}, which states that the previous criterion for FO definability of a relation $R$ in $(\N,+)$ is expressible with the signature $\{+,R\}$.
\end{rem}

\section{Expansions of $\mbox{WMSO}(\N,<)$ by cardinality relations}\label{sec:exp}

In this section we prove that $+$ and $\times$ are definable in any (non-trivial) expansion of $(\N,<)$ by a cardinality relation, and obtain undecidability of the theory as a corollary.

A natural example of a cardinality relation is the equi-cardinality relation $\eqcard(X,Y)$. As noted in Examples \ref{ex:card}, this relation is not definable in $(\N,<)$.  The following result can be obtained as a straightforward consequence of results of \cite{Robinson58}, but we give a short self-contained proof (which will be useful in Section \ref{sec:fini}).

\begin{prop}\label{prop:equi-undec}
The functions $+$ and $\times$ are definable in $(\N,<,\eqcard)$. Therefore $\mbox{WMSO}(\N,<,\eqcard)$ is undecidable. 
\end{prop}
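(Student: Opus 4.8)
The plan is to show that addition and multiplication are WMSO-definable in $(\N,<,\eqcard)$, and then invoke the undecidability of $\mbox{WMSO}(\N,+,\times)$ (equivalently, of first-order arithmetic) to conclude that $\mbox{WMSO}(\N,<,\eqcard)$ is undecidable.

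First I would note that in $(\N,<)$ alone we can already express that a finite set $X$ is an initial segment $[0,q)$, hence we can identify the natural number $q$ with the (definable) set $[0,q)$; call this predicate $Iseg(x,X)$, meaning $X=[0,x)$. With $\eqcard$ available, the key observation is that for an arbitrary finite set $Y$ we can recover its cardinality as the unique $x$ such that $Iseg(x,X)$ and $\eqcard(X,Y)$ hold; so from now on a number $x$ is represented by the initial segment $[0,x)$, and $\eqcard$ lets us compare and transfer cardinalities between arbitrary sets and such canonical representatives.

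Next I would define addition. To define $x+y=z$, I want to exhibit finite sets witnessing that $z$ splits into two disjoint pieces of sizes $x$ and $y$. Concretely: there exist disjoint finite sets $U,V$ with $U\cup V=[0,z)$ (using that union, intersection and $Iseg$ are definable in $(\N,<)$), such that $\eqcard(U,[0,x))$ and $\eqcard(V,[0,y))$. This is a $\{<,\eqcard\}$-formula and clearly defines the graph of $+$. For multiplication, I would proceed by iterating addition: $x\cdot y=z$ should be expressed by saying there is a finite set $S$ whose elements, listed in increasing order via the definable $Consec$ predicate, form the sequence of partial sums $0,x,2x,\dots,yx=z$ — i.e. $S$ contains $0$, contains $z$ as its maximum, has exactly $y+1$ elements (checkable by $\eqcard$ with $[0,y+1)$), and consecutive elements of $S$ differ by exactly $x$ (using the addition predicate just defined together with $Consec$). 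Care is needed to pin down "differ by $x$" using only the ordering and $\eqcard$: the gap between consecutive elements $a<b$ of $S$ is captured by the interval $[a,b)$, which is definable by $Interval$, and we require $\eqcard([a,b),[0,x))$.

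The main obstacle I anticipate is the bookkeeping in the multiplication formula: one must ensure the witnessing set $S$ is genuinely the arithmetic progression $0,x,2x,\dots$ and not something spurious, which forces us to constrain simultaneously its minimum, maximum, cardinality, and all consecutive gaps, and to handle the degenerate cases $x=0$ or $y=0$ separately. Everything else — coding numbers as initial segments, transferring cardinalities via $\eqcard$, and using the definability of union, intervals, and the $Consec$ predicate from the preliminary proposition — is routine. Once $+$ and $\times$ are definable, $\mbox{WMSO}(\N,<,\eqcard)$ interprets the (undecidable) first-order theory of $(\N,+,\times)$, so it is undecidable.
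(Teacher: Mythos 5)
Your proposal is correct and follows essentially the same route as the paper: addition is defined by splitting an initial segment $[0,z)$ into pieces whose cardinalities are matched to $[0,x)$ and $[0,y)$ via $\eqcard$, and multiplication is defined by a witnessing set forming an arithmetic progression whose consecutive gaps have cardinality $x$ (the paper uses gaps of size $y$ and cardinality $x+1$, which is the same idea with the roles of $x$ and $y$ exchanged). The degenerate cases and the final appeal to the undecidability of $\mbox{FO}(\N,+,\times)$ are handled just as in the paper.
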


\begin{proof}
We first define $+$ using the fact that $z=x+y$ iff we have ($x \leq z$ and $|(x,z]|=|[0,y)|$), which we can express in $(\N,<,\eqcard)$. Then we can define $\times$ as follows: for all integers $x,y,z \geq 1$, we express that $z=xy$ by saying that the set of multiples of $y$ less than or equal to $z$ contains $z$ and has cardinality $x+1$. In other words, we have $z=xy$ if and only if there exists a finite set $A$ such that
\begin{itemize} 
\item $A \subseteq [0,z]$
\item $A$ contains both $0$ and $z$
\item $|A|=|[0,x]|$ 
\item for every pair $a_1,a_2$ of consecutive elements of $A$, we have $|[a_1,a_2)|=|[0,y)|$.
\end{itemize}
These properties can be defined easily in $(\N,<,\eqcard)$. Then undecidability follows from the undecidability of $\mbox{FO}(\N,+,\times)$.
\end{proof}

Now we consider the case of unary cardinality relations. By Theorem \ref{thm:caract}, we know that if $R$ is a unary cardinality relation, then $R$ is definable in $(\N,<)$ iff $\I(R)$ is ultimately periodic.

\begin{prop}\label{prop:unaire}
Let $R \subseteq {\mathcal F}$ be a unary cardinality relation that is not definable in $(\N,<)$.\break
The functions $+$ and $\times$ are definable in $(\N,<,R)$. Thus $\mbox{WMSO}(\N,<,R)$ is undecidable. 
\end{prop}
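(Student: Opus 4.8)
The strategy is to reduce to the previous case by showing that $\eqcard$ is definable in $(\N,<,R)$, and then invoke Proposition \ref{prop:equi-undec}. Let $A = \I(R) \subseteq \N$, which by assumption is not ultimately periodic. First I would record the following combinatorial fact: since $A$ is not ultimately periodic, for every $p \geq 1$ and every $M$ there exist arbitrarily large $x \geq M$ with exactly one of $x, x+p$ in $A$ — in other words, the ``defect set'' $D_p = \{x : x \in A \not\leftrightarrow x+p \in A\}$ is infinite for every $p \geq 1$. (If some $D_p$ were finite, $A$ would be ultimately periodic with period $p$.) This is the only use of the hypothesis, and it is where the unary simplification really pays off.

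Next I would use this to define a ``discrete'' predicate from $R$. The idea is that membership patterns of $R$ along arithmetic progressions let us compare cardinalities. Concretely, given finite sets $X$ and $Y$, I want to express $|X| = |Y|$ without already having $\eqcard$. The cleanest route: using $R$ and the fact that $D_1$ is infinite — i.e. there are infinitely many $k$ with exactly one of $k,k+1$ in $A$ — one can quantify over a finite set $W$ and an element witnessing such a $k$ ``just above'' $|W|$, effectively pinning down $|W|$ to a specific value via the pair $(R(W'), R(W''))$ for $W'$, $W''$ of cardinality $|W|$ and $|W|+1$. More carefully, I expect to first define a predicate $\mathrm{Succ}_R(X,Y)$ meaning ``$|Y| = |X|+1$ and $|X|$ lies in $D_1$'' (expressible since $R(X) \leftrightarrow R(X \cup \{z\})$ fails for $z \notin X$), show its domain is unbounded, and then use that to define $\eqcard(X,Y)$: $|X|=|Y|$ iff for every such ``marker'' cardinality $c$, ($|X| < c \leftrightarrow |Y| < c$) and ($|X| \geq c \leftrightarrow |Y| \geq c$) — but comparison $|X| < c$ with $c$ a cardinality witnessed by a set is exactly the kind of thing definable in $(\N,<)$ via subset relations. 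The upshot is: $|X| = |Y|$ iff $X$ and $Y$ are both $\subseteq$-below and $\subseteq$-above every marker set in the same way.

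The main obstacle is the bootstrapping: extracting genuine cardinality comparison from $R$ using only the Boolean information ``$R$ holds / fails'' on finite sets, since a priori $R$ gives us essentially one bit per cardinality value. The key realization that makes it work is that in $(\N,<)$ we already have, for each fixed $k$, the predicates $CARD_k$ and $CARDLESS_k$, and we have $\subseteq$; so the genuinely new content we need from $R$ is only an \emph{unbounded} supply of ``checkpoints'' — values $c$ where $R$'s behavior changes — and non-ultimate-periodicity of $A$ hands us exactly that for period $1$. I would therefore structure the proof as: (i) the defect-set lemma; (ii) definability of an unbounded set of checkpoint cardinalities and hence of $\eqcard$ in $(\N,<,R)$; (iii) apply Proposition \ref{prop:equi-undec} to get $+$, $\times$ definable and $\mbox{WMSO}(\N,<,R)$ undecidable. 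I would also remark that step (ii) is where one must be slightly careful about the difference between a ``defect at $c$ for period $1$'' and a defect for larger periods; working with period $1$ throughout keeps the formulas first-order-ish over the set structure and avoids any case analysis on $\mu$.
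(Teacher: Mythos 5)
Your overall strategy --- define $\eqcard$ in $(\N,<,R)$ and then invoke Proposition \ref{prop:equi-undec} --- is the paper's, and your defect-set lemma (each $D_p=\{x : x\in \I(R) \not\leftrightarrow x+p\in \I(R)\}$ is infinite) is the right combinatorial content of non-ultimate-periodicity. But the way you then exploit it has a genuine gap, in fact two. First, restricting to period $1$ discards essential information. To separate $|X|=x$ from $|Y|=y$ with $y-x=p>1$ you need a witness $z$ such that exactly one of $z+x$, $z+y$ lies in $\I(R)$, i.e.\ a period-$p$ defect at a position $\geq x$; such a witness exists precisely because $\I(R)$ is not ultimately periodic with period $p$ and threshold $x$. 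Period-$1$ defects used as ``threshold markers'' cannot substitute for this: $D_1$, though infinite, can have unbounded gaps, and two distinct cardinalities lying strictly between consecutive markers satisfy your criterion ``$|X|<c \leftrightarrow |Y|<c$ for every marker $c$''. Concretely, for $\I(R)=\bigcup_k [2^{2k},2^{2k+1})$ one gets $D_1=\{2^m-1 : m\geq 0\}$, and $|X|=4$, $|Y|=5$ agree with respect to every marker although $4\neq 5$. Second, the comparison ``$|X|<c$'' for an \emph{arbitrary} finite set $X$ and a marker cardinality $c$ witnessed by some set $W$ is not definable in $(\N,<)$ ``via subset relations'': $X$ and $W$ need not be $\subseteq$-comparable, and the relation $|X|<|W|$ between incomparable sets is exactly the kind of non-regular cardinality comparison you are trying to construct, so invoking it at this stage is circular. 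The same objection applies to the clause ``$|Y|=|X|+1$'' in your $\mathrm{Succ}_R$ unless $Y$ is hard-wired to be $X\cup\{z\}$.

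The repair is to use, for each pair, the defect of the period equal to the difference of cardinalities, and this can be done by a single uniform formula with no case analysis: the paper defines $\eqcard(X,Y)$ by
$F(X,Y):\ \forall Z\, \bigl( Z\cap(X\cup Y)=\emptyset \rightarrow (R(X\cup Z)\leftrightarrow R(Y\cup Z))\bigr)$.
If $|X|\neq|Y|$ and $F(X,Y)$ held, then $\I(R)$ would be ultimately periodic with period $\,||X|-|Y||\,$ and threshold $\min(|X|,|Y|)$, contradicting the hypothesis. The period-dependent argument thus appears only in the correctness proof, not in the formula, so the ``case analysis on $\mu$'' you were trying to avoid never arises.
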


\begin{proof} 
By Proposition \ref{prop:equi-undec} it suffices to prove that the relation $\eqcard$ is definable in $(\N,<,R)$.
Given $x,y \in \N$, we claim that $x=y$ iff the following property, which we denote by $P(x,y)$, holds:
$$\forall z ( z+x \in \I(R) \leftrightarrow z+y \in \I(R)).$$
Obviously if $x=y$ then $P(x,y)$ holds. Now if $x \ne y$, say $x<y$ (without loss of generality), then $P(x,y)$ cannot hold, otherwise $\I(R)$ would be ultimately periodic with period $p=y-x$, i.e. would be STRUP, which by Theorem \ref{thm:caract}  contradicts our hypothesis that $R$ is not definable in $(\N,<)$.

Then we can define the relation $\eqcard$ in $(\N,<,R)$ by the formula $F(X,Y)$:
\begin{equation}\label{eq:F}
\forall Z ( Z \cap (X \cup Y)= \emptyset \rightarrow (R(X \cup Z) \leftrightarrow R(Y \cup Z))).
\end{equation}
Indeed it is easy to check that $F(X,Y)$ holds iff $P(|X|,|Y|)$ holds .
\end{proof}

The proof above shows that if $R$ is a unary cardinality relation which is not definable in $(\N,<)$, then one can define $EqCard$ in $(\N,<,R)$. Conversely the following holds.

\begin{prop}\label{prop:eqcardversunaire}
There exists a unary cardinality relation $R$ that is definable in $(\N,<,EqCard)$ and not definable in $(\N,<)$. 
\end{prop}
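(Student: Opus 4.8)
The plan is to obtain $R$ as the unique unary cardinality relation whose associated set $\I(R)\subseteq\N$ is a fixed set that is not ultimately periodic but is arithmetically simple, and then to exploit Proposition~\ref{prop:equi-undec} (definability of $+$ and $\times$ in $(\N,<,\eqcard)$) in order to express membership of $|X|$ in $\I(R)$.

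First I would fix $A=\{t^2\mid t\in\N\}=\{0,1,4,9,16,\dots\}$. Its successive gaps are $1,3,5,7,\dots$, hence unbounded, so $A$ is not ultimately periodic. Set $R=\C(A)$, the unique unary cardinality relation with $\I(R)=A$. Since for a unary cardinality relation $R$ is definable in $(\N,<)$ exactly when $\I(R)$ is ultimately periodic (Theorem~\ref{thm:caract}, as already noted just before Proposition~\ref{prop:unaire}), we conclude that $R$ is not definable in $(\N,<)$.

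It then remains to show that $R$ is definable in $(\N,<,\eqcard)$. The first point is that the relation $\{(y,X)\mid y=|X|\}$ is definable in $(\N,<,\eqcard)$: one has $y=|X|$ iff $\eqcard(X,Y)$ holds for the set $Y=[0,y)$, and the predicate ``$Y=[0,y)$'' is already definable in $(\N,<)$ via the $Interval$-type predicate. The second point is that, by Proposition~\ref{prop:equi-undec}, $\times$ is definable in $(\N,<,\eqcard)$, so the unary predicate ``$y$ is a perfect square'', namely $\exists t\,(y=t\times t)$, is definable there as well. Combining the two, $R$ is defined in $(\N,<,\eqcard)$ by
$$R(X)\ \Longleftrightarrow\ \exists y\,\bigl(y=|X|\ \wedge\ \exists t\,(y=t\times t)\bigr),$$
which yields the required defining formula.

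The argument is essentially routine, and I do not expect a genuine obstacle. The only mild care needed concerns (i) the off-by-one conventions relating $y$, the set $[0,y)$, and $|X|$, and (ii) the observation that the functions $+$ and $\times$ supplied by Proposition~\ref{prop:equi-undec} operate on the first-order sort — which is precisely what is used here, since $|X|$ is a first-order element. One could alternatively avoid $\times$ altogether by taking $A$ to be the set of powers of $2$ and inlining the multiplication trick from the proof of Proposition~\ref{prop:equi-undec}, but routing through the already-established definability of $\times$ is shorter.
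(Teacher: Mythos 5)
Your proposal is correct and follows essentially the same route as the paper: the paper also takes $R(X)$ to be ``$|X|$ is a square'', argues non-definability in $(\N,<)$ from the set of squares not being ultimately periodic via Theorem~\ref{thm:caract}, and defines $R$ in $(\N,<,\eqcard)$ by equicardinality with an initial segment $[0,x)$ where $x$ is a square, using the definability of $\times$ from Proposition~\ref{prop:equi-undec}.
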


\begin{proof}
We can choose e.g. $R(X)$ as the relation which holds iff $|X|$ is a square. The set $\I(R)$ is not ultimately periodic thus by Theorem \ref{thm:caract}, $R$ is not definable in $(\N,<)$. Moreover $R(X)$ is definable in $(\N,<,EqCard)$ by a formula which expresses that either $X=\emptyset$, or there exists a set $Y$ such that $EqCard(X,Y)$ holds, and $Y=[0,x)$ where $x$ is a square. The latter property is definable in  $(\N,\times)$, hence by Proposition \ref{prop:equi-undec} it is definable in $(\N,<,EqCard)$.
\end{proof}

We consider now expansions of $(\N,<)$ by cardinality relations of any arity. Given $n \geq 1$ and any cardinality relation $R \subseteq {\mathcal F}^n$ which is not definable in $(\N,<)$, we shall prove that $+$ and $\times$ are definable in $(\N,<,R)$ by proving that there exists some {\em unary} cardinality relation $R'$ that is definable in $(\N,<,R)$ and not in $(\N,<)$, from which the result will follow by Proposition \ref{prop:unaire}.

\begin{thm}\label{thm:MV}
Let $n \geq 1$, and let $R \subseteq {\mathcal F}^n$ be a cardinality relation. If $R$ is not definable in $(\N,<)$ then there exists some unary cardinality relation $R' \subseteq {\mathcal F}$ that is definable in $(\N,<,R)$ and not definable in $(\N,<)$.

An equivalent formulation is: a cardinality relation $R$ is definable in $(\N,<)$ iff every unary cardinality relation definable in $(\N,<,R)$ is definable in $(\N,<)$.
\end{thm}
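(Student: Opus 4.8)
The plan is to proceed by induction on the arity $n$, using Theorem~\ref{thm:caract} to locate the "source" of non-definability. Since $R$ is not definable in $(\N,<)$, the characterization in item $(3)$ of Theorem~\ref{thm:caract} fails, so either (a) some section of $\I(R)$ is not Card-definable in $(\N,<)$, or (b) $\I(R)$ is not STRUP. In case (a), a section $[\I(R)]_{i,C}$ is a subset of $\N^{n-1}$ which is not Card-definable in $(\N,<)$; since $R$ is a cardinality relation, this section is $\I(R_{i,X})$ for any $X$ with $|X|=C$, and $R_{i,X}$ is definable in $(\N,<,R)$ (fixing a component to a set of fixed cardinality is expressible using the $CARD_C$ predicate, which is definable in $(\N,<)$). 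Thus $R_{i,X}$ is an $(n-1)$-ary cardinality relation, definable in $(\N,<,R)$ and not definable in $(\N,<)$, and the induction hypothesis applied to it yields the desired unary $R'$.

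So the real work is case (b): $\I(R)$ is not STRUP but (we may assume, otherwise we are in case (a)) every section of $\I(R)$ \emph{is} Card-definable in $(\N,<)$. The goal is to extract directly a unary cardinality relation $R'$ definable in $(\N,<,R)$ whose $\I(R')$ is not ultimately periodic. The idea is that failure of STRUP means: for every candidate modulus $\mu=(m,p_1,\dots,p_n)$ with all $p_i\geq 1$, there is a tuple $(x_1,\dots,x_n)\in\N^n_{\geq m}$ and an index $j$ witnessing $(x_1,\dots,x_n)\in\I(R) \not\Leftrightarrow (x_1,\dots,x_j+p_j,\dots,x_n)\in\I(R)$. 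I would try to "diagonalize along one coordinate": fix all but one coordinate and look at the one-dimensional trace. Concretely, I plan to show that for a suitable choice of (definable) way to tie the remaining coordinates to the varying one — e.g. setting $x_1=\dots=x_{n-1}=t$ and $x_n=t$, or more flexibly $x_i=f_i(t)$ for $(\N,<)$-Card-definable functions $f_i$ — the resulting set $\{t \mid (f_1(t),\dots,f_n(t))\in\I(R)\}$ can be made non-ultimately-periodic, and this set is exactly $\I(R')$ for a unary $R'$ definable in $(\N,<,R)$ (one writes a formula that, given $T$, builds sets $X_1,\dots,X_n$ of the prescribed cardinalities $f_i(|T|)$ and asserts $R(X_1,\dots,X_n)$). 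The key combinatorial lemma to isolate is: if every section of $S\subseteq\N^n$ is recognizable but $S$ itself is not STRUP, then some such one-dimensional projection is not ultimately periodic.

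To make that lemma work I would argue by contradiction: suppose every one-dimensional trace obtained this way \emph{is} ultimately periodic. A natural first attempt is the "diagonal" trace $D=\{t\mid (t,\dots,t)\in\I(R)\}$, but that alone is not enough — one can have $D$ periodic yet $\I(R)$ not STRUP. So I expect to need a parametrized family: for each fixed tuple of offsets $(c_1,\dots,c_n)\in\N^n$, consider $D_{\bar c}=\{t\mid (t+c_1,\dots,t+c_n)\in\I(R)\}$, each of which is a unary cardinality relation definable in $(\N,<,R)$. If every $D_{\bar c}$ is ultimately periodic, I would try to combine the periods (or use that each section is recognizable, giving uniform control on "eventual" periods across sections) to conclude that $\I(R)$ is STRUP, contradicting case (b). The step that makes this go through smoothly is the recognizability of sections from hypothesis $3(a)$: it bounds the structure of $\I(R)$ in each fibre, so that ultimate periodicity of all the diagonal-offset traces forces a single global modulus.

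The main obstacle I anticipate is exactly this last combinatorial step — showing that "all diagonal-offset traces ultimately periodic" plus "all sections recognizable" implies "$\I(R)$ is STRUP". One has to be careful about uniformity of the periods and thresholds $m$ as the offsets $\bar c$ vary, and to handle the interaction between the $n$ coordinates (STRUP requires a single modulus working simultaneously in all directions on the region $\N^n_{\geq m}$). If a clean direct argument proves elusive, an alternative route is to bypass it: use the fact (reminiscent of the Michaux--Villemaire approach for Presburger arithmetic) that one can define in $(\N,<,R)$ a unary relation encoding, say, "the least witness of failure of $\mu$-STRUP" as $\mu$ ranges over a definable enumeration, and show such an encoding cannot be ultimately periodic lest $R$ become definable in $(\N,<)$ after all. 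Either way, the inductive reduction of case (a) is routine; the whole difficulty is concentrated in producing the non-periodic unary trace in case (b).
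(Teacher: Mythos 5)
Your reduction in case (a) --- passing to a non-definable section via $CARD_C$ and invoking the induction hypothesis --- is exactly what the paper does. The gap is in case (b), and it is not merely a technical step left open: the key combinatorial lemma you isolate is false. Take $n=2$ and $R=EqCard$, so $\I(R)=\{(x,x) \mid x\ge 0\}$. Every section of $\I(R)$ is a singleton, hence recognizable, and $\I(R)$ is not STRUP; yet every diagonal-offset trace $D_{\bar c}=\{t \mid (t+c_1,t+c_2)\in\I(R)\}$ is either $\N$ or $\emptyset$, hence ultimately periodic. (Allowing $x_i=f_i(t)$ for $(\N,<)$-Card-definable $f_i$ does not help: a function with recognizable graph takes only finitely many values, so those traces are even more degenerate.) So no unary relation of the ``trace along a definable curve'' form can witness the non-definability of $EqCard$, and the contradiction you hope to derive from ``all traces ultimately periodic'' cannot be reached; your worry about uniformity of periods is not the real obstacle.

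The idea you are missing is that the failure of STRUP should be used to define the \emph{binary} relation $EqCard$ itself, rather than to extract a unary trace directly. The paper's formula $FF(X,Y)$ (formula (\ref{eq:FF})) asserts that $X$ and $Y$ are interchangeable as cardinality increments in every coordinate of $R$: if $|X|=|Y|$ this holds because $R$ is a cardinality relation, and if $|X|\ne|Y|$ it must fail, since otherwise $\I(R)$ would be $\mu$-STRUP with $\mu=(\min(|X|,|Y|),\,||X|-|Y||,\dots,||X|-|Y||)$, contradicting the case hypothesis; note that no assumption on the sections is needed here. Once $EqCard$ is definable in $(\N,<,R)$, Proposition \ref{prop:eqcardversunaire} supplies the required unary cardinality relation (``$|X|$ is a square''), exploiting the definability of $+$ and $\times$ in $(\N,<,EqCard)$ from Proposition \ref{prop:equi-undec}. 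This bootstrap through full arithmetic is essential precisely because, as the $EqCard$ example shows, the unary relations reachable by your trace construction are too weak.
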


\begin{proof}
The proof proceeds by induction over $n$. The case $n=1$ is trivial. Assume now that $n \geq 1$ and that the claim is true for every $n' \leq n$. Let  $R \subseteq {\mathcal F}^{n+1}$ be a cardinality relation which is not definable in $(\N,<)$. By Theorem \ref{thm:caract}, either some section of $R$ is not definable in $(\N,<)$, or $\I(R)$ is not STRUP. 

If there exists some section $R_{i,C}$ of $R$ which is not definable in $(\N,<)$, then the result follows from the application of the induction hypothesis to the $n$-ary cardinality relation ${\tilde{R}}(X_1,\dots,X_{i-1},X_{i+1},\dots, X_{n+1})$:  
$$\exists X_i (CARD_C(X_i) \wedge R(X_1,\dots,X_{n+1}))$$ that is definable in $(\N,<,R)$ and not in $(\N,<)$.

Assume now that $\I(R)$ is not STRUP. Then we can define the relation $\eqcard$ in $(\N,<,R)$ by the formula $FF(X,Y)$ defined as:
{\setlength\arraycolsep{3pt}
\begin{eqnarray}\label{eq:FF}
\lefteqn{
\forall Z_1 \dots \forall Z_n (\bigwedge_{i=1}^n (Z_i \cap (X \cup Y)= \emptyset \rightarrow
} 
\\
& & (R(Z_1,\dots, Z_{i-1}, Z_i \cup X,Z_{i+1},\dots,Z_n) 
\leftrightarrow  R(Z_1,\dots, Z_{i-1},Z_i \cup Y,Z_{i+1},\dots,Z_n))). \nonumber
\end{eqnarray}
}%
This formula generalizes the formula $F(X,Y)$ defined in the proof of Proposition \ref{prop:unaire}.
It is clear that if $|X|=|Y|$ then $FF(X,Y)$ holds. Conversely if $FF(X,Y)$ holds for some $X,Y \in {\mathcal F}$ then $|X|=|Y|$, since otherwise  $\I(R)$ would be $\mu$-STRUP with
$$\mu=(\min(|X|,|Y|),||X|-|Y||,\dots,||X|-|Y||)$$
and this contradicts our hypothesis.
\end{proof}

\begin{rem}
With a careful examination of the above proof and using Proposition \ref{prop:autodef}, it is possible to extract an uniform definition of $R'$ from $R$, that is, for every $n \geq 1$ and every $n$-ary relational symbol $R(X_1,\dots,X_n)$ one can find a $\{<,R\}$-formula $S(X)$ such that if $R$ is not definable in $(\N,<)$ then the same holds for $S$.  
\end{rem}

We can state the main theorem of our section.

\begin{thm}\label{thm:indeci}
Let $n \geq 1$, and let $R \subseteq {\mathcal F}^n$ be a cardinality relation which is not definable in $(\N,<)$. The functions $+$ and $\times$ are definable in $(\N,<,R)$. Thus $\mbox{WMSO}(\N,<,R)$ is undecidable. 
\end{thm}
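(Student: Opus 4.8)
The plan is to combine the two preceding results in a straightforward chain. By Theorem~\ref{thm:MV}, since $R \subseteq {\mathcal F}^n$ is a cardinality relation which is not definable in $(\N,<)$, there exists a \emph{unary} cardinality relation $R' \subseteq {\mathcal F}$ that is definable in $(\N,<,R)$ and not definable in $(\N,<)$. Then I would apply Proposition~\ref{prop:unaire} to $R'$: since $R'$ is a unary cardinality relation not definable in $(\N,<)$, the functions $+$ and $\times$ are definable in $(\N,<,R')$.

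The only thing that requires a word of justification is transitivity of definability along the tower $(\N,<) \subseteq (\N,<,R) $ with $R'$ definable in the latter. Concretely, since $R'$ is definable in $(\N,<,R)$, any $\{<,R'\}$-formula can be rewritten into an equivalent $\{<,R\}$-formula by replacing each occurrence of the symbol $R'$ by its defining $\{<,R\}$-formula; hence everything definable in $(\N,<,R')$ — in particular the graphs of $+$ and $\times$, by Proposition~\ref{prop:unaire} — is definable in $(\N,<,R)$. I expect this bookkeeping step to be entirely routine; there is no real obstacle here, as the substantive work has already been done in Theorems~\ref{thm:caract} and~\ref{thm:MV} and in Propositions~\ref{prop:equi-undec} and~\ref{prop:unaire}.

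Finally, for the undecidability conclusion, I would invoke the undecidability of $\mbox{FO}(\N,+,\times)$: since $+$ and $\times$ are definable in $(\N,<,R)$, every sentence of $\mbox{FO}(\N,+,\times)$ translates (via the defining formulas, and noting that first-order quantification is a special case of monadic second-order quantification restricted to singletons, which is itself definable) into a $\{<,R\}$-sentence that holds in $(\N,<,R)$ if and only if the original holds in $(\N,+,\times)$. A decision procedure for $\mbox{WMSO}(\N,<,R)$ would therefore yield one for $\mbox{FO}(\N,+,\times)$, which is impossible. This mirrors exactly the final step of the proof of Proposition~\ref{prop:equi-undec}, so I would keep the argument terse and simply point back to that proof.
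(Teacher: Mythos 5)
Your proof is correct and follows exactly the paper's own argument: apply Theorem~\ref{thm:MV} to obtain a unary cardinality relation $R'$ definable in $(\N,<,R)$ but not in $(\N,<)$, then conclude via Proposition~\ref{prop:unaire} and transitivity of definability. The extra remarks on substituting defining formulas and on reducing from $\mbox{FO}(\N,+,\times)$ are routine and consistent with what the paper leaves implicit.
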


\begin{proof}
By Theorem \ref{thm:MV} there exists a unary cardinality relation that is definable in $(\N,<,R)$ but not in $(\N,<)$. The result follows from Proposition \ref{prop:unaire}.
\end{proof}

\begin{rem}\label{rem:FO}
Theorem \ref{thm:MV} resembles Michaux-Villemaire theorem for Presburger Arithmetic \cite{MV96}, which states that a relation $R \subseteq \N^k$ is FO definable in $(\N,+)$ iff every unary relation $R' \subseteq \N$ which is FO definable in $(\N,+,R)$ is FO definable in $(\N,+)$. 

Michaux-Villemaire result was used in \cite{Bes97a} to prove undecidability results for a class of expansions of Presburger Arithmetic. Theorem \ref{thm:indeci} can be seen as a variant of this work.

Our results deal with WMSO logic while \cite{Bes97a,MV96} consider FO logic. We can actually re-formulate our results in FO logic. Let us explain the main ideas.
Given $k \geq 2$, the so-called {\em B\"uchi-Bruy\`ere Arithmetic} (of base 
$2$) (see \cite{BHMV94}) is the structure $(\N,+,V_2)$ where $V_2(x)$ is the greatest power of $2$ which divides $x$ (with $V_2(0)=0$). There exists a strong connection between $\mbox{WMSO}(\N,<)$ and $\mbox{FO}(\N,+,V_2)$ (see \cite{villemaire1992joining} and \cite[Section 2.6]{bes02a}). Indeed consider the function $cod: {\mathcal F} \rightarrow \N$ which maps every $X \in {\mathcal F}$ to the integer $cod(X)=\sum_{i \in X} 2^i$. The function $cod$ extends naturally to a function $cod:{\mathcal F}^n \rightarrow \N^n$ for every $n \geq 1$. One can prove that a relation $R \subseteq {\mathcal F}^n$ is (WMSO) definable in $(\N,<)$ iff $cod(R)$ is FO-definable in $(\N,+,V_2)$. This allows to transfer (un)decidability results between theories $\mbox{WMSO}(\N,<,R)$ and theories $\mbox{FO}(\N,+,V_k,cod(R))$. In particular Theorem \ref{thm:indeci} yields the following result (which we state without proof).

\begin{prop} For every $n \geq 1$ and every cardinality relation $R \subseteq {\mathcal F}^n$ such that $cod(R)$ is not FO-definable in $(\N,+,V_2)$, the function $\times$ is FO-definable in $(\N,+,V_2,cod(R))$. It follows that $\mbox{FO}(\N,+,V_2,cod(R))$ is undecidable.
\end{prop}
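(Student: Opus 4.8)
The plan is to transfer Theorem~\ref{thm:indeci} along the coding $cod$ recalled above, using the tight correspondence between $\mbox{WMSO}(\N,<)$ and $\mbox{FO}(\N,+,V_2)$. First I would make that correspondence precise enough to be used as a tool: for every $\mbox{WMSO}$ $\{<,R\}$-formula $\varphi(X_1,\dots,X_m)$ over set variables there is an effectively computable $\mbox{FO}$ $\{+,V_2,cod(R)\}$-formula $\varphi^{*}(u_1,\dots,u_m)$ such that $(\N,<,R)\models\varphi(X_1,\dots,X_m)$ iff $(\N,+,V_2,cod(R))\models\varphi^{*}(cod(X_1),\dots,cod(X_m))$ for all finite $X_i$, and conversely every $\mbox{FO}$ $\{+,V_2,cod(R)\}$-formula is equivalent to some $\varphi^{*}$; first-order variables $x$ of $\varphi$ are handled by coding $x$ as $cod(\{x\})=2^{x}$ and relativizing the matching quantifiers to the ($\mbox{FO}(\N,+,V_2)$-definable) set of powers of $2$. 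This is standard, see \cite{villemaire1992joining,bes02a,BHMV94}; in particular, dropping $R$, one gets that $cod(R)$ is $\mbox{FO}$-definable in $(\N,+,V_2)$ iff $R$ is definable in $(\N,<)$.

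Now assume $cod(R)$ is not $\mbox{FO}$-definable in $(\N,+,V_2)$. By the previous remark $R$ is not definable in $(\N,<)$, so Theorem~\ref{thm:indeci} provides $\mbox{WMSO}$ $\{<,R\}$-formulas $\varphi_{+}(X,Y,Z)$ and $\varphi_{\times}(X,Y,Z)$ asserting $|X|+|Y|=|Z|$ and $|X|\cdot|Y|=|Z|$. Writing $\beta(u)$ for the number of $1$'s in the base-$2$ expansion of $u$ (so $\beta(cod(X))=|X|$), the translations $\varphi_{+}^{*},\varphi_{\times}^{*}$ are $\mbox{FO}$ $\{+,V_2,cod(R)\}$-formulas expressing $\beta(u)+\beta(v)=\beta(w)$ and $\beta(u)\cdot\beta(v)=\beta(w)$. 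Likewise, ``$Y$ is the initial segment $[0,|X|)$'' is definable in $(\N,<,R)$ --- it is the conjunction of the $(\N,<)$-definable property ``$Y$ is an initial segment'' with $\eqcard(X,Y)$, the latter being definable by Proposition~\ref{prop:unaire} and Theorem~\ref{thm:MV} --- so its translation is an $\mbox{FO}$ $\{+,V_2,cod(R)\}$-formula $\iota(u,v)$ expressing $v+1=2^{\beta(u)}$.

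It remains to assemble these formulas into a copy of $(\N,+,\times)$. Let $P=\{2^{k}\mid k\ge 0\}$, which is $\mbox{FO}$-definable in $(\N,+,V_2)$ and carries a definable successor $p\mapsto p+p$; identify $2^{k}\in P$ with $k$. For $p=2^{k}\in P$ one has $\beta(p-1)=k$, so $\varphi_{+}^{*}$ and $\varphi_{\times}^{*}$, applied to $p-1,q-1,r-1$, define on $P$ operations $\oplus,\otimes$ with $2^{k}\oplus 2^{l}=2^{k+l}$ and $2^{k}\otimes 2^{l}=2^{kl}$. Then $(P,\oplus,\otimes)\cong(\N,+,\times)$ is interpreted in $(\N,+,V_2,cod(R))$, so $\mbox{FO}(\N,+,\times)$ is interpretable in $\mbox{FO}(\N,+,V_2,cod(R))$, whence the latter is undecidable; and $\times$ is $\mbox{FO}$-definable in $(\N,+,V_2,cod(R))$ in this sense, i.e.\ through $\varphi_{\times}^{*}$ (together with $\iota$, which ties a cardinality to the code $2^{k}-1$ of the canonical witnessing segment). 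For the undecidability claim alone one can also bypass the interpretation entirely and simply observe that $\varphi\mapsto\varphi^{*}$ is an effective truth-preserving translation of $\{<,R\}$-sentences into $\{+,V_2,cod(R)\}$-sentences, hence reduces $\mbox{WMSO}(\N,<,R)$ --- undecidable by Theorem~\ref{thm:indeci} --- to $\mbox{FO}(\N,+,V_2,cod(R))$.

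The delicate point --- and the reason the argument is routed through the definable copy $P$ rather than through $\N$ directly --- is that the graph of $x\mapsto 2^{x}$ is not $\mbox{FO}$-definable in $(\N,+,V_2)$ (see \cite{BHMV94}). Hence ``$\beta(u)\cdot\beta(v)=\beta(w)$ is definable'' cannot be turned into genuine multiplication on $\N$ by the naive choice $u=2^{x}-1$: under $cod$, natural numbers play the role of codes of finite sets and cardinality arithmetic becomes arithmetic on the bit-count $\beta$, so one must reintroduce an honest copy of $\N$ by hand. Everything else in the argument is bookkeeping about the coding $cod$, for which I would rely on \cite{villemaire1992joining,bes02a,BHMV94}.
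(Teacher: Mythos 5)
Your reduction for the undecidability claim is correct (the translation $\varphi\mapsto\varphi^*$ reduces $\mbox{WMSO}(\N,<,R)$, undecidable by Theorem~\ref{thm:indeci}, to $\mbox{FO}(\N,+,V_2,cod(R))$), and your interpretation of $(\N,+,\times)$ on the set $P$ of powers of $2$ is also correct. But the proposition asserts more than interpretability: it asserts that $\times$ itself, i.e.\ the relation $xy=z$ on all of $\N$, is FO-definable in $(\N,+,V_2,cod(R))$. You explicitly stop short of this (``$\times$ is FO-definable \dots in this sense''), and the obstruction you identify --- the graph of $x\mapsto 2^x$ is not definable in $(\N,+,V_2)$, so one cannot pass from an element $x$ to the code $2^x-1$ of a canonical set of cardinality $x$ --- is real for $(\N,+,V_2)$ alone but is not fatal in the expansion. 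The gap is that you translate the wrong relation: cardinality arithmetic ($|X|\cdot|Y|=|Z|$, which becomes arithmetic on bit-counts) instead of arithmetic on the codes themselves.

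The missing step is a bootstrap. By Theorem~\ref{thm:indeci}, $+$ and $\times$ on \emph{elements} are WMSO-definable in $(\N,<,R)$, hence so is any relation on elements and finite sets that is FO-definable in $(\N,+,\times)$ with set parameters. In particular the relation ``$y=cod(X)=\sum_{i\in X}2^i$'' between an element $y$ and a finite set $X$ is WMSO-definable in $(\N,<,R)$ (binary-digit extraction is FO in $(\N,+,\times)$), and so is $M(X,Y,Z)$ defined as $cod(X)\cdot cod(Y)=cod(Z)$. Now push these through your own translation: since a first-order $y$ is coded as $2^y$ and a set $X$ as $cod(X)$, the translate of ``$y=cod(X)$'' is exactly the graph $\{(2^u,u) \mid u\in\N\}$ of base-$2$ exponentiation --- definable in $(\N,+,V_2,cod(R))$ even though not in $(\N,+,V_2)$ --- and, since $cod$ is a bijection from ${\mathcal F}$ onto $\N$, the translate of $M$ is literally $\{(u,v,w)\mid uv=w\}$. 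So genuine multiplication on $\N$ is definable, not merely a copy of it. (The paper states this proposition without proof, so there is no official argument to compare against; but this one extra observation is what the literal statement requires, and the rest of your bookkeeping about $cod$ is fine.)
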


For instance, the Proposition above applies to the structure
$(\N,+,V_2,EqNonZeroBits)$, where $EqNonZeroBits(x,y)$ holds iff the
binary representations of $x$ and $y$ have the same number of non-zero
bits; indeed we have $EqNonZeroBits=cod(EqCard)$.
\end{rem}

\section{The satisfiability problem for finite orderings}\label{sec:fini}

\subsection{Introduction}

In this section we consider the satisfiability problem for MSO logic in the signature $\{<,R\}$ over the class of finite orderings, where $R$ is interpreted as a cardinality relation. 

It should be noted that the question initially raised by Bruno Courcelle deals with satisfiability over the class of labelled graphs, which contains the class of finite orderings. Our undecidability results extend easily to the class of labelled graphs.

Let us fix some notations and definitions. From now on, we interpret formulas in finite structures, 
thus there is no need to distinguish between MSO and WMSO.
Given relations $R_1,\dots,R_k$ over ${\mathcal F}$ of any arity, we interpret $\{<,R_1,\dots,R_k\}$-formulas in structures of the form 
$${{\mathcal M}_q}=([0,q),<^{{\mathcal M}_q},R_1^{{\mathcal M}_q},\dots,R_k^{{\mathcal M}_q})$$ where $q \geq 1$, and $<^{{\mathcal M}_q},R_1^{{\mathcal M}_q},\dots,R_k^{{\mathcal M}_q}$ correspond to the restriction of $<,R_1,\dots,R_k$ (respectively) to the interval $[0,q)$. We shall simply write ${{\mathcal M}_q}=([0,q),<,R_1,\dots,R_k)$.

Given $q\geq 1$ and a $\{<,R_1,\dots,R_k\}$-formula $\varphi(X_1,\dots,X_n)$, we denote by $Val_q(\varphi)$ the relation defined by $\varphi$ in ${\mathcal M}_q$. If $Val_q(\varphi)$ is a cardinality relation (more precisely, the restriction of a cardinality relation to subsets of $[0,q)$), then we denote by ${\mathcal I}_q(\varphi)$ the relation ${\mathcal I}(Val_q(\varphi))$.

Besides the notion of definability of a relation in a fixed structure ${\mathcal M}_q$, we also consider a notion of definability over the class of structures ${\mathcal M}_q$:  given an $n$-ary relation $X$ over ${\mathcal F}$, we will say that $X$ is {\em $\{<,R_1,\dots,R_k\}$-finite-definable} if there exists a $\{<,R_1,\dots,R_k\}$-formula $\varphi(X_1,\dots,X_k)$ 
such that $Val_q(\varphi)=X \cap [0,q)^n$ for every $q \geq 1$. The previous definitions extend naturally to predicates with first-order free variables.

Given a signature $\mathcal L$, the {\em satisfiability problem for $MSO_{\mathcal L}$ over finite orderings} is the (decision) problem of whether a given  ${\mathcal L}$-sentence holds in at least one structure ${\mathcal M}_q$.

We first state some well-known (un)decidability results for the satisfiability problem. The following is an easy consequence of decidability of $\mbox{WMSO}(\N,<)$ (Theorem \ref{thm:buchi} item $(2)$) and the fact that the property ``$X$ is a proper initial segment of $\N$" is definable in $(\N,<)$.

\begin{prop}\label{prop:finitebuchi}
The satisfiability problem for $MSO_{\{<\}}$ over finite orderings is decidable.
\end{prop}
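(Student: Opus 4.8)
The statement to prove is Proposition \ref{prop:finitebuchi}: the satisfiability problem for $MSO_{\{<\}}$ over finite orderings is decidable.

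The plan is to reduce satisfiability over the class of finite orderings $\{\mathcal{M}_q \mid q \geq 1\}$ to satisfiability over the single structure $(\N,<)$, whose WMSO theory is decidable by Theorem \ref{thm:buchi}(2). The key observation is that each finite structure $\mathcal{M}_q = ([0,q),<)$ embeds as the initial segment $[0,q)$ of $(\N,<)$, and conversely every nonempty finite initial segment of $\N$ is of this form. So I would first fix an $\{<\}$-sentence $\varphi$ and construct, by relativization, a WMSO $\{<\}$-formula $\varphi^{(Y)}$ with one free monadic variable $Y$, obtained by restricting all element quantifiers to range over elements of $Y$ and all set quantifiers to range over subsets of $Y$. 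Concretely, replace each subformula $\exists x\, \psi$ by $\exists x\,(x \in Y \wedge \psi)$, each $\forall x\, \psi$ by $\forall x\,(x \in Y \rightarrow \psi)$, and similarly $\exists X\, \psi$ by $\exists X\,(X \subseteq Y \wedge \psi)$, $\forall X\, \psi$ by $\forall X\,(X \subseteq Y \rightarrow \psi)$; atomic formulas are unchanged (the ordering $<$ on $Y$ is just the restriction of $<$ on $\N$). Here $X \subseteq Y$ is an abbreviation for $\forall x(x \in X \rightarrow x \in Y)$.

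Next I would invoke (as stated in the paragraph preceding the proposition) that the property ``$Y$ is a proper nonempty initial segment of $\N$'' is definable in $(\N,<)$ by some formula $InitSeg(Y)$; concretely this says $Y \neq \emptyset$, $Y$ is finite (automatic in WMSO), and for all $x,y$ with $y \in Y$ and $x < y$ we have $x \in Y$ (equivalently, $Y$ is downward closed and a proper subset of $\N$, the latter being expressible as $\exists x\, (x \notin Y)$, which in WMSO with finite sets is automatic). Then I claim: $\varphi$ is satisfiable over the class of finite orderings if and only if the WMSO $\{<\}$-sentence
$$\exists Y\,(InitSeg(Y) \wedge \varphi^{(Y)})$$
holds in $(\N,<)$. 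For the forward direction, if $\mathcal{M}_q \models \varphi$ for some $q \geq 1$, take $Y = [0,q)$; by a straightforward induction on the structure of $\varphi$, the relativized formula $\varphi^{(Y)}$ evaluated in $(\N,<)$ with this interpretation of $Y$ is equivalent to $\varphi$ evaluated in $\mathcal{M}_q$, so $(\N,<) \models InitSeg(Y) \wedge \varphi^{(Y)}$. Conversely, if $(\N,<) \models \exists Y\,(InitSeg(Y) \wedge \varphi^{(Y)})$, pick a witnessing set $Y$; by $InitSeg(Y)$ we have $Y = [0,q)$ for some $q \geq 1$, and the same induction shows $\mathcal{M}_q \models \varphi$.

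The reduction is clearly effective: $\varphi^{(Y)}$ is computed syntactically from $\varphi$, and $InitSeg$ is a fixed formula. Since $\mbox{WMSO}(\N,<)$ is decidable, we can decide whether $\exists Y\,(InitSeg(Y) \wedge \varphi^{(Y)})$ holds in $(\N,<)$, hence we can decide whether $\varphi$ is satisfiable over the class of finite orderings. I do not expect any serious obstacle here: the only point requiring care is the correctness of the relativization, i.e. the induction verifying that $\varphi^{(Y)}$ with $Y$ interpreted as $[0,q)$ faithfully simulates evaluation in $\mathcal{M}_q$ — but this is the standard relativization lemma, and the fact that $<$ is a relation symbol (not involving function symbols that could lead outside $Y$) makes it routine. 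One should also note that one could alternatively argue via automata: the language $\{c(Y) \mid (\N,<) \models \varphi^{(Y)}\}$ is rational by Theorem \ref{thm:buchi}(1), and $\varphi$ is satisfiable over finite orderings iff this language contains a word of the form $1^q$ for some $q \geq 1$, which is decidable by intersecting with the regular language $11^*$ and testing for emptiness.
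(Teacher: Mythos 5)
Your proof is correct and is exactly the argument the paper intends: the paper gives no explicit proof, only the remark that the result follows from decidability of $\mbox{WMSO}(\N,<)$ together with the definability of ``$X$ is a proper initial segment of $\N$'', and your relativization to a nonempty finite initial segment $Y$ is the standard way to carry this out. No gaps.
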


On the other hand, if we consider the signature $\{+,\times\}$ where the interpretation of $+$ (respectively $\times$) is the restriction of the graph of addition (resp. multiplication) to the domain of the structure, then we have the following.

\begin{prop}\label{prop:trakh}
\cite{Trakh50} \ 
The satisfiability problem for $MSO_{\{+,\times\}}$ over finite orderings is undecidable.
\end{prop}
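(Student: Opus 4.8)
The statement to prove is Proposition~\ref{prop:trakh}: the satisfiability problem for $MSO_{\{+,\times\}}$ over finite orderings is undecidable. This is Trakhtenbrot's theorem, adapted to this particular finite-structure setting where $+$ and $\times$ are interpreted as the restrictions of the usual addition and multiplication graphs to $[0,q)$.

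\medskip

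\textbf{The plan.} The strategy is a reduction from the (undecidable) $\Pi^0_1$-completeness of validity in first-order logic, or more directly from the undecidability of $\mathrm{FO}(\N,+,\times)$, which is already invoked in the proof of Proposition~\ref{prop:equi-undec}. Concretely I would proceed by effectively translating an arbitrary first-order sentence $\theta$ over $\{+,\times\}$ into an $MSO_{\{+,\times\}}$-sentence $\widehat{\theta}$ such that $\theta$ holds in $(\N,+,\times)$ if and only if $\widehat{\theta}$ has a finite model among the structures $\mathcal{M}_q = ([0,q),<,+,\times)$. Since arithmetic truth is not even arithmetically definable (in particular not semidecidable), this gives undecidability of finite satisfiability; alternatively, to stay within ``reduce an undecidable set'' hygiene, one reduces the complement of the halting problem, using the classical fact (due to Trakhtenbrot) that non-halting of a Turing machine is expressible by a first-order sentence having only infinite models, combined with the observation that over $[0,q)$ addition and multiplication let us describe arbitrarily long finite computation tableaux.

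\medskip

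\textbf{Key steps, in order.} First, I would note that although the structures $\mathcal{M}_q$ have $+$ and $\times$ interpreted only \emph{partially} (a sum or product is recorded only when it stays below $q$), the \emph{order} $<$ is definable from $+$ (via $x \le y \leftrightarrow \exists z\, (x+z = y)$, with defined-ness automatically witnessed inside $[0,q)$), and conversely one can detect inside $\mathcal{M}_q$ exactly which tuples are ``genuine'' sums/products versus which are simply absent because of overflow. Second, for a Turing machine $T$ I would build, following the standard Trakhtenbrot construction, a first-order sentence $\sigma_T$ over $\{+,\times,<\}$ asserting the existence of a well-formed, non-accepting computation tableau of $T$ on empty input, where tableau cells are indexed by pairs of numbers encoded arithmetically (e.g. via a pairing polynomial $\langle i,j\rangle = (i+j)(i+j+1)/2 + i$ or a bounded use of a $\beta$-style predicate), and where content is stored in monadic second-order sets $C_a$ (one for each symbol/state). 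The crucial point is that such a sentence, interpreted over a finite initial segment $[0,q)$ with $q$ large enough to hold the whole tableau, is satisfiable in $\mathcal{M}_q$ precisely when $T$ runs for fewer than some bound of steps without accepting; by letting the sentence additionally assert ``the computation has not yet halted at the last available time step unless it loops,'' one arranges that $\sigma_T$ is finitely satisfiable iff $T$ does not halt on empty input. Third, I would assemble these ingredients: the map $T \mapsto \sigma_T$ is computable, and $\sigma_T$ is satisfiable over finite orderings with $\{+,\times\}$ iff $T \notin \mathrm{HALT}$, establishing undecidability.

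\medskip

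\textbf{Main obstacle.} The delicate part is handling the \emph{partiality} of the arithmetic operations caused by overflow: inside $[0,q)$ a formula like $x \cdot y = z$ is only meaningful when the true product is $< q$, so one must be careful that the encoding of the computation tableau never silently relies on an overflowed value. The clean way around this is to first have $\sigma_T$ assert the existence of a distinguished ``scale'' element $N$ (the intended tableau size) together with enough room — i.e. that $N^2$, or whatever pairing bound is needed, still lies inside the model — and to relativize all arithmetic in the tableau description to numbers below $N$; overflow above $N$ is then irrelevant, and a model of size $q > N^c$ for the appropriate constant $c$ exists iff the genuine (unbounded) tableau of that size exists. This is exactly the place where the richness of $\{+,\times\}$ is essential and where the argument would fail for weaker signatures, and it mirrors the phrase in the paper's Section~\ref{sec:fini} introduction that the finite-model proofs ``essentially consist in defining arithmetic on (arbitrarily great) initial segments of the domain.'' Everything else is the routine Trakhtenbrot bookkeeping, which I would cite \cite{Trakh50} rather than reproduce.
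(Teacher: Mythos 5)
The paper offers no proof of this proposition; it is simply quoted from \cite{Trakh50}. Your reconstruction has the right encoding machinery (arithmetical pairing, tableau cells indexed by numbers, relativization below a scale element $N$ with $N^2<q$ to neutralize overflow, definability of $<$ from $+$), but the reduction runs in the wrong direction, and this is fatal. The satisfiability problem at hand is recursively enumerable: for each fixed $q$ the structure ${\mathcal M}_q=([0,q),+,\times)$ is a computable finite structure and MSO model checking over a fixed finite structure is decidable, so ``there exists $q$ with ${\mathcal M}_q\models\varphi$'' is a $\Sigma^0_1$ property of $\varphi$. Consequently no many-one reduction to it from a non-$\Sigma^0_1$ set can exist: neither truth in $(\N,+,\times)$ (not even arithmetical) nor the complement of the halting problem ($\Pi^0_1$-complete) can serve as the source of the reduction, yet these are exactly the two routes you propose. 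Concretely, the sentence you describe --- ``there is a well-formed tableau that has not yet halted at the last available time step'' --- is satisfiable in some ${\mathcal M}_q$ for \emph{every} machine $T$: if $T$ halts at step $n$, simply take $q$ too small for $n$ steps to fit; the truncated tableau is well formed and not yet halted, so $\sigma_T$ holds there. The ``unless it loops'' clause does not repair this, since a single sentence cannot force $q$ to be arbitrarily large.

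The fix is the standard Trakhtenbrot direction: build $\sigma_T$ asserting the existence of a \emph{complete, halting} computation tableau of $T$ on empty input, entirely coded below some element $N$ with enough arithmetical room inside $[0,q)$. Then $T$ halts iff $\sigma_T$ holds in ${\mathcal M}_q$ for all sufficiently large $q$, and if $T$ never halts then no initial segment contains a complete halting tableau, so $\sigma_T$ holds in no ${\mathcal M}_q$. This reduces the halting problem --- a $\Sigma^0_1$-complete set --- to the satisfiability problem and yields undecidability (indeed $\Sigma^0_1$-completeness). Your overflow analysis and the definition of $<$ from $+$ carry over unchanged to this corrected reduction; note also that first-order logic already suffices here, the monadic predicates $C_a$ being a convenience rather than a necessity.
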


\subsection{Cardinality relations and finite structures}

In the sequel we deal with signatures of the form $\{<,R\}$ where $R$ denotes a cardinality relation. A first natural question is whether Theorem \ref{thm:caract}, which provides a characterization of cardinality relations which are definable in $(\N,<)$, still hold in the context of finite orderings. The answer is positive.

\begin{prop}\label{prop:caract-fini}
For every integer $n \geq 1$ and every cardinality relation $R \subseteq {\mathcal F}^n$, the relation $R$ is $\{<\}$-finite-definable iff $R$ is definable in $(\N,<)$.
\end{prop}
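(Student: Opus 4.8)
The plan is to prove the two implications separately, using Theorem~\ref{thm:caract} as a bridge between the two notions of definability.

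For the direction ``definable in $(\N,<)$ implies $\{<\}$-finite-definable'', I would argue that any $\{<\}$-formula $\varphi(X_1,\dots,X_n)$ defining $R$ in $(\N,<)$ also witnesses finite-definability once we relativize quantifiers. More precisely, since the interpretation of $<$ in ${\mathcal M}_q$ is exactly the restriction of $<$ to $[0,q)$, an $\{<\}$-sentence or formula evaluated in ${\mathcal M}_q$ behaves the same as in $(\N,<)$ with all set variables (and first-order variables) required to be contained in $[0,q)$; WMSO already only quantifies over finite sets, so for $q$ large enough relative to the sets $X_1,\dots,X_n$ the value is unchanged, but we need $Val_q(\varphi)=R\cap\big(2^{[0,q)}\big)^n$ for \emph{every} $q$. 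The clean way is to use Proposition~\ref{prop:finitebuchi}-style reasoning: the property ``all elements mentioned lie in a common proper initial segment'' together with the rational language $L_R$ from Theorem~\ref{thm:buchi} gives an automaton, and reading $c(\overline X)$ over the finite domain $[0,q)$ is the same as reading it over $\N$ provided $\max(X_1\cup\dots\cup X_n)<q$, which holds automatically when $X_i\subseteq[0,q)$. So the same automaton (hence the same formula) defines $R\cap\big(2^{[0,q)}\big)^n$ in every ${\mathcal M}_q$.

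For the converse, ``$\{<\}$-finite-definable implies definable in $(\N,<)$'', I would show the contrapositive via Theorem~\ref{thm:caract}: if $R$ is a cardinality relation not definable in $(\N,<)$, then $\I(R)$ fails one of conditions $3(a)$ or $3(b)$, and in either case I claim no single $\{<\}$-formula can define $R$ uniformly over all ${\mathcal M}_q$. The key tool is a pumping/periodicity argument internal to the automaton: if $\varphi$ were a $\{<\}$-formula with $Val_q(\varphi)=R\cap\big(2^{[0,q)}\big)^n$ for all $q$, then in particular taking $q$ arbitrarily large, $\varphi$ restricted to tuples $\overline X$ with $\max(X_1\cup\dots\cup X_n)<q$ defines exactly $R$ on such tuples; letting $q\to\infty$ this means $\varphi$ defines $R$ on all of ${\mathcal F}^n$ in $(\N,<)$ (every finite tuple eventually fits below some $q$), contradicting the assumption. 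The subtlety I must be careful about is that $\varphi$ evaluated in ${\mathcal M}_q$ and in $(\N,<)$ might differ because quantified subsets in ${\mathcal M}_q$ are constrained to $[0,q)$; but since $\varphi$ defines a \emph{cardinality} relation in each ${\mathcal M}_q$ and WMSO over $(\N,<)$ only ever needs finite witnesses, one can massage $\varphi$ (relativizing its quantifiers to a definable initial segment $[0,z)$ containing the free variables) into a genuine $\{<\}$-formula of $(\N,<)$ computing the same values, and then Theorem~\ref{thm:buchi} forces $L_R$ rational, contradicting Theorem~\ref{thm:caract}.

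The main obstacle I expect is precisely this quantifier-relativization step in the converse: making rigorous that a formula which behaves correctly on every finite ${\mathcal M}_q$ can be turned into a formula of $(\N,<)$ defining $R$, without the artifact that bounded quantification in ${\mathcal M}_q$ is strictly weaker than unbounded quantification in $(\N,<)$. The resolution is that one does not need unbounded quantification: by the automata characterization, membership of $c(\overline X)$ in $L_R$ only depends on the prefix of length $1+\max(X_1\cup\dots\cup X_n)$, so everything relevant already lives below a definable bound, and the finite-definability hypothesis feeds exactly this bounded behaviour. Once that is set up, both directions close and the equivalence follows; the statement is really saying that for cardinality relations the finite and infinite settings coincide, which is unsurprising given that a cardinality relation's truth value on sets inside $[0,q)$ is determined by cardinalities bounded by $q$.
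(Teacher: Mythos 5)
Your proposal is correct, and one of its two halves diverges from the paper in an interesting way. For the direction ``$\{<\}$-finite-definable implies definable in $(\N,<)$'' you end up exactly where the paper does: relativize the quantifiers of the witnessing formula to a definable proper initial segment $[0,z)$ containing the free variables, and use the uniformity of $Val_q(\varphi)$ over all $q$ to conclude that the relativized formula defines $R$ in $(\N,<)$ (your detour through the contrapositive and Theorem~\ref{thm:caract} is harmless but unnecessary; the relativization already yields the implication directly). For the converse the paper argues quite differently: it invokes item $(2)$ of Theorem~\ref{thm:caract} to write $\I(R)$ as a finite union of products $E_1\times\dots\times E_n$ of ultimately periodic sets and then observes that each condition ``$|X_i|\in E_i$'' is $\{<\}$-finite-definable, so the argument is specific to cardinality relations. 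You instead run the B\"uchi--Elgot--Trakhtenbrot automaton for $L_R$ inside each ${\mathcal M}_q$, which proves the stronger statement that \emph{every} relation definable in $(\N,<)$ is $\{<\}$-finite-definable; the only detail to make explicit is padding, i.e.\ that the word read in ${\mathcal M}_q$ is $c(\overline X)$ extended by all-zero letters to length $q$, so the formula should encode acceptance of the (still rational) language $L_R\cdot\{(0,\dots,0)\}^*$ rather than $L_R$ itself. With that adjustment both directions close, and your route buys generality at the cost of re-deriving the automaton-to-formula translation over finite domains, whereas the paper's route stays entirely within the machinery it has already set up for cardinality relations.
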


\begin{proof}(sketch)
The "only if" direction is an easy consequence of the fact that ``$X$ is a proper initial segment of $(\N,<)$" is definable in $(\N,<)$. For the converse we can use item $(2)$ of Theorem \ref{thm:caract}, and the fact that the unary relation ``$|X| \in E$" is $\{<\}$-finite-definable for every ultimately periodic set $E$.  
\end{proof}

Now let us consider the satisfiability problem for $MSO_{\mathcal L}$ when ${\mathcal L}=\{<,R\}$ where $R$ is a cardinality relation. The following is a version of Proposition \ref{prop:equi-undec} for finite orderings.

\begin{prop}\label{prop:equi-undec-fini}
The graphs of $+$ and $\times$ are $\{<,\eqcard\}$-finite-definable. Therefore the satisfiability problem for $MSO_{\{<,\eqcard\}}$ over finite orderings is undecidable.
\end{prop}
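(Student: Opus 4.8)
The plan is to reuse the definitions from the proof of Proposition \ref{prop:equi-undec}, but to check carefully that they remain correct when interpreted over the finite structure $\mathcal M_q$ rather than over $(\N,<)$, and to phrase them so that they yield \emph{finite-definability} in the sense defined above. First I would recall that $z = x + y$ holds iff $x \leq z$ and $|(x,z]| = |[0,y)|$; the latter is expressed by $\eqcard$ applied to the (finite-definable) interval predicates $Interval$. Since $(x,z]$ and $[0,y)$ are subsets of $[0,q)$ whenever $x,y,z \in [0,q)$, this equivalence continues to hold in $\mathcal M_q$ for every $q$: if $z = x+y$ and all three lie in $[0,q)$, then the two intervals are genuine subsets of the domain with the asserted cardinalities; conversely if the formula holds in $\mathcal M_q$ then $|(x,z]| = z-x = |[0,y)| = y$, so $z = x+y$. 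Thus the addition formula $\varphi_+(x,y,z)$ satisfies $Val_q(\varphi_+) = \{(x,y,z) \in [0,q)^3 \mid z = x+y\}$ for every $q \geq 1$, which is exactly $\{<,\eqcard\}$-finite-definability of the graph of $+$.

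Next I would treat multiplication analogously. Recall the characterization: for $x,y,z \geq 1$, $z = xy$ iff there is a finite set $A \subseteq [0,z]$ containing $0$ and $z$, with $|A| = |[0,x]|$, such that any two consecutive elements $a_1 < a_2$ of $A$ satisfy $|[a_1,a_2)| = |[0,y)|$. All ingredients — $A \subseteq [0,z]$, membership of $0$ and $z$, $Consec$, $Interval$, $\eqcard$ — are available in the signature $\{<,\eqcard\}$ and, being either $(\N,<)$-definable or $\eqcard$-definable, their interpretation over $\mathcal M_q$ is just the restriction of the intended relation. One checks that in $\mathcal M_q$ the witnessing set $A$, if it exists, is contained in $[0,z] \subseteq [0,q)$, so it is a legitimate element of the (weak) monadic domain of $\mathcal M_q$; conversely any $A$ satisfying the conditions in $\mathcal M_q$ forces $A = \{0, y, 2y, \dots, xy\}$ with $xy = z$. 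Handling the degenerate cases $x = 0$, $y = 0$ or $z = 0$ separately (by explicit small formulas) then gives a formula $\varphi_\times(x,y,z)$ with $Val_q(\varphi_\times) = \{(x,y,z) \in [0,q)^3 \mid z = xy\}$ for every $q$.

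Finally, with $+$ and $\times$ (as graphs of partial functions restricted to each $[0,q)$) now $\{<,\eqcard\}$-finite-definable, I would invoke Proposition \ref{prop:trakh}: the satisfiability problem for $MSO_{\{+,\times\}}$ over finite orderings is undecidable. Given any $\{+,\times\}$-sentence $\sigma$, replace every atomic occurrence of $+$ and $\times$ by the formulas $\varphi_+$ and $\varphi_\times$ to obtain a $\{<,\eqcard\}$-sentence $\sigma^\ast$; since $Val_q(\varphi_+)$ and $Val_q(\varphi_\times)$ coincide with the restrictions of the graphs of $+$ and $\times$ to $[0,q)$ for each $q$, we have $\mathcal M_q \models \sigma^\ast$ iff $\sigma$ holds in the finite ordering $[0,q)$ with the restricted arithmetic. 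Hence $\sigma$ has a finite model in the sense of Proposition \ref{prop:trakh} iff $\sigma^\ast$ has a model of the form $\mathcal M_q$, and decidability of the latter problem would contradict Proposition \ref{prop:trakh}.

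The main obstacle, and the only place requiring real care rather than copying the infinite-case argument, is the \emph{boundedness} verification: one must confirm that all the auxiliary sets quantified over in the definitions of $+$ and $\times$ (the intervals, and especially the set $A$ of multiples of $y$ up to $z$) automatically live inside the domain $[0,q)$ whenever the free variables do, so that quantifying over subsets of $[0,q)$ neither adds spurious witnesses nor loses intended ones. This is where the choice to bound everything by $z$ (itself bounded by $q$) in the multiplication formula pays off, and it is worth spelling out explicitly.
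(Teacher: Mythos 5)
Your proposal is correct and follows essentially the same route as the paper, which simply observes that the defining formulas for $+$ and $\times$ from Proposition \ref{prop:equi-undec} remain valid over each finite structure ${\mathcal M}_q$ (all witness sets being bounded by $z\leq q$) and then invokes Proposition \ref{prop:trakh}. Your explicit verification of the boundedness of the auxiliary sets is exactly the point the paper leaves implicit.
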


\begin{proof}
The defining formulas given for $+$ and $\times$ in the proof of Proposition \ref{prop:equi-undec} still hold in the finite case. Undecidability follows from Proposition \ref{prop:trakh}.  
\end{proof}

Let us point out some difficulties which arise when one tries to adapt proofs of Proposition \ref{prop:unaire} and Theorem \ref{thm:MV} in the context of finite orderings. Consider first the case of unary cardinality relations. In the proof of Proposition \ref{prop:unaire} we show that for every cardinality relation $R(X)$ which is not definable in $(\N,<)$, the relation $\eqcard$ is definable in $\mbox{WMSO}(\N,<,R)$ by the formula 
$F(X,Y)$:
$$ \forall Z ( Z \cap (X \cup Y)= \emptyset \rightarrow (R(X \cup Z) \leftrightarrow R(Y \cup Z))).$$
This does not hold anymore for finite models, since now 
the sets $Z$ are limited to subsets of the (finite) domain, thus it can happen that two subsets $X,Y$ of the domain with different cardinality cannot be distinguished with such small sets $Z$. The issue is similar for the formula (\ref{eq:FF}) used in the proof of Theorem \ref{thm:MV}.

We shall prove undecidability of the satisfiability problem for $MSO_{\{<,R\}}$ over finite orderings by defining a sufficiently big fragment of the $\eqcard$ relation and then use Proposition \ref{prop:equi-undec-fini}.

\subsection{Definability of fragments of the equi-cardinality relation}

We consider formulas  which capture (arbitrarily great) fragments of $EqCard$.

\begin{defi}\label{def:quasi}
Let $n \geq 1$, $R \subseteq {\mathcal F}^n$ be a cardinality relation, and ${\mathcal L}=\{<,R\}$. 
Given a $\mathcal L$-formula $\varphi(X,Y)$, we say that $\varphi$ defines a {\em quasi-equicardinality} relation if the following properties hold:

\begin{enumerate}
\item $Val_q(\varphi)$ is a cardinality relation for every $q \geq 0$
\item $\I_q(\varphi) \subseteq \I_q(EqCard)$   for every $q \geq 0$
\item For every $k \geq 1$ there exists $Q \geq 1$ such that for every $q \geq Q$ we have 
$$\I_q(\varphi)  \cap ([0,k) \times [0,k))= \I_q(EqCard)  \cap ([0,k) \times [0,k))$$
(i.e. for $q$ great enough, the interpretations of $\varphi$ and $EqCard$ in ${\mathcal M}_q$ coincide for subsets of size less than $k$).

\end{enumerate}
\end{defi}

\begin{prop}\label{prop:quasieq}
Let $n \geq 1$ and let $R \subseteq {\mathcal F}^n$ be a cardinality relation which is not definable in $(\N,<)$. There exists a $\{<,R\}$-formula $\varphi$ which defines a quasi-equicardinality relation.
\end{prop}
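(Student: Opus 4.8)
The goal is to produce, from an $n$-ary cardinality relation $R$ that is not definable in $(\N,<)$, a $\{<,R\}$-formula $\varphi(X,Y)$ whose interpretation in each finite structure ${\mathcal M}_q$ is a cardinality relation contained in $EqCard$, and which agrees with $EqCard$ on small sets once $q$ is large. I would argue by induction on $n$, mirroring the structure of the proofs of Theorem~\ref{thm:caract} and Theorem~\ref{thm:MV}, but being careful about the finiteness of the domain throughout.

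\medskip

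\noindent\textbf{The inductive setup.} By Theorem~\ref{thm:caract}, since $R$ is not definable in $(\N,<)$, either some section $R_{i,C}$ of $R$ is not definable in $(\N,<)$, or $\I(R)$ is not STRUP. In the first case, the section $R_{i,C}$ is an $(n-1)$-ary cardinality relation not definable in $(\N,<)$, and one would like to apply the induction hypothesis. The subtlety, compared with the infinite case, is that the "hardwiring" of the $i$-th coordinate to a set of size $C$ must be done inside ${\mathcal M}_q$; since $C$ is a fixed constant, the predicate $CARD_C$ is $\{<\}$-finite-definable, so replacing every atom $R(T_1,\dots,T_n)$ in the formula given by induction with $\exists T_i\,(CARD_C(T_i)\wedge R(T_1,\dots,T_n))$ works, provided $q > C$; for the finitely many $q\le C$ one can tolerate arbitrary behaviour since conditions (1)--(3) of Definition~\ref{def:quasi} only constrain large $q$ (with the caveat that (1) and (2) are required for all $q$, so one should check that the built formula still yields a cardinality relation inside $EqCard$ even for small $q$ — this is automatic since the inner formula, being a relativization of a quasi-equicardinality formula, always defines a sub-relation of $EqCard$).

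\medskip

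\noindent\textbf{The base-case/non-STRUP case.} When $\I(R)$ is not STRUP, one must define a fragment of $EqCard$ directly. The natural candidate, generalizing formula~(\ref{eq:FF}), is a bounded version: $\varphi(X,Y)$ saying that $X$ and $Y$ are disjoint-witnessed-equivalent, i.e. for all $Z_1,\dots,Z_n$ disjoint from $X\cup Y$ within the domain, $R(Z_1,\dots,Z_i\cup X,\dots,Z_n)\leftrightarrow R(Z_1,\dots,Z_i\cup Y,\dots,Z_n)$ for each $i$. Conditions (1) and (2) require checking that this is a cardinality relation (clear, since it depends only on $|X|,|Y|$ via the cardinality of $Z_i\cup X$ etc., using the pigeonhole/inclusion of a set of the right size in $[0,q)$ — here one must verify one can always realize the needed sizes inside $[0,q)$, which forces a constraint relating the sizes of $X,Y,Z_i$ and $q$) and that it implies $|X|=|Y|$ (this uses that $\I(R)$ is not STRUP: if $|X|\ne|Y|$ the formula would certify $\mu$-STRUPness for $\mu=(\min(|X|,|Y|),\,\bigl||X|-|Y|\bigr|,\dots)$). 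The genuinely new work is condition (3): given $k$, I must find $Q$ so that for $q\ge Q$ and all $X,Y\subseteq[0,q)$ with $|X|,|Y|<k$, if $|X|=|Y|$ then $\varphi$ holds in ${\mathcal M}_q$. This is where the obstacle lies: in the finite domain $[0,q)$ there may not be "enough room" to choose the witness sets $Z_j$ needed to separate $X$ from $Y$, so for small $q$ equal-cardinality pairs are indistinguishable — but that is fine; the point is the converse direction, that $|X|=|Y|$ with both $<k$ genuinely forces $\varphi$. Since $Val_q(\varphi)$ is a cardinality relation and, restricted to the diagonal, $|X|=|Y|$ trivially makes the biconditionals hold (both sides literally equal), condition (3) should follow once one notes $\I_q(EqCard)\cap([0,k)^2)$ is just the diagonal and $\varphi$ always contains the diagonal. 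So in fact the three conditions for the non-STRUP case are relatively direct; the care is all in bookkeeping the domain-size bounds $Q$ so that the sections-case formula composes correctly.

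\medskip

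\noindent\textbf{Expected main obstacle.} I expect the hard part to be the \emph{uniformity in $q$}: ensuring that a single formula $\varphi$ simultaneously (a) defines a cardinality relation in \emph{every} ${\mathcal M}_q$, (b) never goes outside $EqCard$, and (c) eventually captures all small equi-cardinal pairs, while in the section-case the inner constant $C$ and the relativizing predicate $CARD_C$ interact with the domain size. Threading the quantifier $Q$ from Definition~\ref{def:quasi}(3) through the induction — i.e. showing the $Q$ one gets for the composed formula can be chosen from the $Q$'s of the sub-formulas plus the constants $C$ — is the bookkeeping core of the argument. Everything else is a routine adaptation of the proofs of Theorems~\ref{thm:caract} and~\ref{thm:MV} to bounded witness sets.
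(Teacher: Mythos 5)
Your inductive skeleton (split via Theorem~\ref{thm:caract} into the non-definable-section case, handled by the induction hypothesis applied to $\tilde R$, and the non-STRUP case) matches the paper. The genuine gap is in the non-STRUP case: you take $\varphi$ to be the bounded version of formula~(\ref{eq:FF}) alone, and justify condition~(2) of Definition~\ref{def:quasi} by saying that $FF(X,Y)$ with $|X|\ne|Y|$ ``would certify $\mu$-STRUPness'' of $\I(R)$. That inference is only valid over $\N$: STRUPness is a statement about \emph{all} sufficiently large tuples, whereas $FF$ holding in ${\mathcal M}_q$ only quantifies over witnesses $Z_i\subseteq[0,q)$. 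For small $q$ there is simply no room for the witnesses that refute periodicity, so $FF$ does hold for pairs of unequal cardinality (e.g.\ for $\I(R)=$ powers of $2$, $q=3$, $X=\{0\}$, $Y=\{1,2\}$: the only admissible $Z$ is $\emptyset$ and $R(X)\leftrightarrow R(Y)$ is true). Since condition~(2) is required for \emph{every} $q\ge 0$ — and is genuinely used for the relativization in Theorem~\ref{thm:indecfini} — your $\varphi$ does not define a quasi-equicardinality relation. Your remark that small-$q$ indistinguishability ``is fine'' is exactly the point at which the argument breaks.

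The missing idea is a minimality clause: the paper takes
$\varphi(X,Y):\ FF(X,Y)\wedge\neg\exists X'(X'\subsetneq X\wedge FF(X',Y))\wedge\neg\exists Y'(Y'\subsetneq Y\wedge FF(X,Y'))$.
This makes condition~(2) hold unconditionally for all $q$ (if $|X|<|Y|$, pick $Y'\subsetneq Y$ with $|Y'|=|X|$; then $FF(X,Y')$ holds trivially because $R$ is a cardinality relation, violating minimality), but it relocates all the real work to condition~(3), which you declare ``relatively direct'': one must now show that for $|X|=|Y|\le k$ and $q$ large, \emph{no} strict subset $X'$ of $X$ satisfies $FF(X',Y)$. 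That is precisely the room-in-the-domain argument you wave away — for each defect of periodicity $p\le k$ one fixes a witnessing tuple $\overline{x}_p$ above threshold $k$ (available because $\I(R)$ is not STRUP), sets $Q=2k+\max_{p,i}x_{p,i}$, and checks that the required $Z_1,\dots,Z_n$ fit inside $[0,q)$ for $q\ge Q$. Without the minimality clause your condition~(2) fails outright; with it, your proof of condition~(3) is missing its only nontrivial step.
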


\begin{proof}

We first prove the claim for every cardinality relation $R \subseteq {\mathcal F}^n$ such that $\I(R)$ is not STRUP.

We re-use the formula  $FF(X,Y)$ which was introduced in the proof of Theorem \ref{thm:MV}, and which was defined as 
{\setlength\arraycolsep{3pt}
\begin{eqnarray}
\lefteqn{
\forall Z_1 \dots \forall Z_n (\bigwedge_{i=1}^n (Z_i \cap (X \cup Y)= \emptyset \rightarrow
} 
\\
& & (R(Z_1,\dots, Z_{i-1}, Z_i \cup X,Z_{i+1},\dots,Z_n) 
\leftrightarrow  R(Z_1,\dots, Z_{i-1}, Z_i \cup Y,Z_{i+1},\dots,Z_n))). \nonumber
\end{eqnarray}
}%
We claim that the $\{<,R\}$-formula $\varphi(X,Y):$
$$FF(X,Y) \wedge \neg\exists X' (X' \subsetneq X \wedge FF(X',Y)) \wedge \neg\exists Y' (Y' \subsetneq Y \wedge FF(X,Y'))$$
defines a quasi-equicardinality relation. 

The fact that $\varphi$ defines a cardinality relation for every $q \geq 0$ follows easily from the fact that this is true for $FF(X,Y)$.

Let us now prove that $\I_q(\varphi) \subseteq \I_q(EqCard)$ 
for every $q \geq 0$. Assume that $\varphi(X,Y)$ holds for some sets $X,Y \subseteq [0,q)$. If $|X| \ne |Y|$, say $|X| < |Y|$ without loss of generality, then if we choose $Y' \subsetneq Y$ such that $|Y'|=|X|$ then $FF(X,Y')$ holds. This implies that $\exists Y' (Y' \subsetneq Y \wedge FF(X,Y'))$ holds, which contradicts the fact that $\varphi(X,Y)$ holds. 

It remains to prove that $\varphi$ satisfies item $(3)$ of Definition \ref{def:quasi}. Let $k \geq 1$. By our assumption $\I(R)$ is not STRUP, thus for every $p \geq 1$, if we set $\mu=(0,p,p,\dots,p)$ and $m=k$ in Definition \ref{def:strup}, then we know that there exists some $n$-tuple $\overline{x}_p=(x_{p,1},\dots,x_{p,n}) \in \N^n$ and some $j_p \in [1,n]$ such that $x_{p,i} \geq k$ for every $i$, and exactly one element among $\overline{x}_p$ and 
$(x_{p,1},\dots,x_{p,j_p-1},x_{p,j_p}+p,x_{p,j_p+1},\dots,x_{p,n})$ belongs to $\I(R)$. 

We set $Q=2k+\max\{ x_{p,i}  \  | 1 \leq p \leq k, i \in [1,n] \}$. 

Assume that $q \geq Q$, and let $X,Y \subseteq [0,q)$ be such that $|X|=|Y| \leq k$. We have to show that $\varphi(X,Y)$ holds in ${\mathcal M}_q$. It is clear that $FF(X,Y)$ holds. Let us prove that $\exists X' (X' \subsetneq X \wedge FF(X',Y))$ does not hold (the proof that $\exists Y' (Y' \subsetneq Y \wedge FF(X,Y'))$ does not hold is similar). Indeed assume that there exists a strict subset $X'$ of $X$ such that $FF(X',Y)$ holds. Let $p=|Y|-|X'|$. We claim that there exist sets $Z_1, \dots, Z_n \subseteq [0,q)$ such that 
\begin{itemize}
\item $|Z_i|=x_{p,i}$ for every $i \ne j_p$;
\item  $Z_{j_{p}}$ is disjoint from $X' \cup Y$ and satisfies $|Z_{j_{p}}|=x_{p,j_p}-|X'|$.
\end{itemize}
Indeed, for every $i \ne j_p$ the existence of $Z_i$ follows from the fact that $x_{p,i} \leq Q \leq q$. In order to prove that $Z_{j_p}$ exists it suffices to prove that 
$$|[0,q) \setminus (X' \cup Y)| \geq x_{p,j_p}-|X'| \geq 0.$$
On the one hand we have
$$x_{p,j_p}-|X'| \geq x_{p,j_p}-|X| \geq x_{p,j_p}-k \geq 0$$ 
and on the other hand we have
$$|[0,q) \setminus (X' \cup Y)| \geq q-2k \geq Q-2k \geq x_{p,j_p} \geq x_{p,j_p} - |X'|.$$

Now we have $|X' \cup Z_{j_{p}}|=x_{p,j_p}$ and $|Y \cup Z_{j_{p}}|=|Y|+|Z_{j_{p}}|=x_{p,j_p}+p$. By definition of $\overline{x}_p$, exactly one element among
$(x_{p,1},\dots,x_{p,n})$ and $(x_{p,1},\dots,x_{p,j_p-1},x_{p,j_p}+p,x_{p,j_p+1},\dots,x_{p,n})$ belongs to $\I(R)$, that is, exactly one element among $(Z_1, \dots,  Z_{j_p-1},Z_{j_p} \cup X', Z_{j_p+1}, \dots, Z_n)$ and $(Z_1, \dots,  Z_{j_p-1}, Z_{j_p} \cup Y, Z_{j_p+1}, \dots, Z_n)$ belongs to $R$, hence the formula 
$$R(Z_1,\dots, Z_{j_p-1},Z_{j_p} \cup Y,Z_{j_p+1},\dots,Z_n) \leftrightarrow R(Z_1,\dots, Z_{j_p-1}, Z_{j_p} \cup Y,Z_{j_p+1},\dots,Z_n)$$ 
does not hold, which contradicts our assumption  that $FF(X',Y)$ holds. 

We proved the claim for every cardinality relation $R \subseteq {\mathcal F}^n$ such that $\I(R)$ is not STRUP. Now we can prove the claim for every cardinality relation $R \subseteq {\mathcal F}^n$ by induction over $n$. The case $n=1$ follows from the above result  since if $R$ is not definable in $(\N,<)$ then $\I(R)$ is not STRUP. For the induction step we use Theorem \ref{thm:MV}:  either some section of $\I(R)$ is not definable in $(\N,<)$, or $\I(R)$ is not STRUP. For the  latter case we use again the above result, and for the former case the result follows from the induction hypothesis and the fact that every section of $\I(R)$ is Card-definable in $(\N,<,R)$.
\end{proof}

\subsection{Undecidability of the satisfiability problem}

We can state the main result of Section \ref{sec:fini}.

\begin{thm}\label{thm:indecfini}
Let $n \geq 1$, and let $R \subseteq {\mathcal F}^n$ be a cardinality relation which is not definable in $(\N,<)$. The satisfiability problem for $MSO_{\{<,R\}}$ over finite orderings is undecidable.  
\end{thm}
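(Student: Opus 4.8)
The plan is to reduce the satisfiability problem for $MSO_{\{<,+,\times\}}$ over finite orderings (undecidable by Proposition~\ref{prop:trakh}) to the satisfiability problem for $MSO_{\{<,R\}}$ over finite orderings. The bridge is Proposition~\ref{prop:quasieq}, which gives us a $\{<,R\}$-formula $\varphi(X,Y)$ defining a quasi-equicardinality relation, i.e.\ one that agrees with $EqCard$ on all sets of bounded size once the domain is large enough. The obstacle flagged in the Introduction to Section~\ref{sec:fini} is that $\varphi$ is \emph{not} $\{<,R\}$-finite-definable as $EqCard$: on a fixed model ${\mathcal M}_q$ it may fail to relate large sets correctly. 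So the reduction cannot simply ``replace $EqCard$ by $\varphi$'' in the formulas of Proposition~\ref{prop:equi-undec}; instead we must confine all the arithmetic to an initial segment of ${\mathcal M}_q$ that is small relative to $q$, so that $\varphi$ behaves exactly like $EqCard$ there.

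First I would fix, for a given $\{<,+,\times\}$-sentence $\theta$, a uniform way of re-interpreting $\theta$ inside a definable sub-ordering. Concretely: write a $\{<,R\}$-formula stating that a first-order variable $b$ is a point of the domain, and relativize all of $\theta$'s quantifiers to the initial segment $[0,b)$, where elements $x<b$ code themselves and a subset of $[0,b)$ codes a number by its cardinality. Addition $z=x+y$ and multiplication $z=x\cdot y$ on $[0,b)$ are expressed exactly as in the proofs of Propositions~\ref{prop:equi-undec} and~\ref{prop:equi-undec-fini}, but with every occurrence of $EqCard$ replaced by the quasi-equicardinality formula $\varphi$, and with all the auxiliary sets ($[0,x)$, $(x,z]$, the set $A$ of multiples, intervals $[a_1,a_2)$) required to live inside $[0,b)$ — hence to have cardinality $<b$. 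Call the resulting relativized sentence $\theta^{b}$, and let $\hat\theta$ be the $\{<,R\}$-sentence $\exists b\,(b\text{ is ``far enough from the top''}\wedge\theta^{b})$, where ``far enough'' means $|[0,b)|$ is at most some fixed fraction of $q$ — this slack is exactly what Proposition~\ref{prop:quasieq} item~(3) (together with the $Q=2k+\dots$ bookkeeping in its proof) needs in order to guarantee that $\varphi$ computes $EqCard$ on all the sets that occur.

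Next I would verify the two directions of the reduction. If $\theta$ holds in some finite ordering of size $N$, then for any $q$ large enough (so that $[0,N)\subseteq[0,b)$ sits well below the top of $[0,q)$ and $\varphi$ coincides with $EqCard$ on all subsets of $[0,b)$), the structure ${\mathcal M}_q$ satisfies $\hat\theta$ by taking $b=N$: on $[0,b)$ all the defining formulas for $+$, $\times$ reduce to their $EqCard$-versions and hence correctly compute addition and multiplication, so $\theta^{b}$ holds. Conversely, if ${\mathcal M}_q\models\hat\theta$ for some $q$, pick a witness $b$; because $b$ is required to be far from the top, $\varphi$ agrees with $EqCard$ on all subsets of $[0,b)$ (here I would quantify the constant $k$ by $b$ itself and choose the ``far enough'' threshold so that $q\ge Q(b)$ is forced), so $+$ and $\times$ are correctly computed on $[0,b)$ and $([0,b),<,+,\times)\models\theta$, giving a finite model of $\theta$.

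The main obstacle is the uniformity in the last paragraph: the slack required by Proposition~\ref{prop:quasieq} depends on the size $k$ of the sets involved, and the sets involved in simulating arithmetic on $[0,b)$ have size up to $b$, so I must set up the threshold in $\hat\theta$ so that choosing $b$ automatically forces $q$ to exceed the corresponding $Q(b)$ from the proof of Proposition~\ref{prop:quasieq}. One clean way is to have $\hat\theta$ assert not just that $b$ is small but that the segment $[0,b)$ is small enough to be ``doubled and then padded'' inside $[0,q)$, mirroring the inequality $q\ge 2k+\max\{x_{p,i}\}$ from that proof; since the $x_{p,i}$ there depend on $R$ but not on $q$, and we only need the conclusion for \emph{some} $q$, this is harmless. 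Once that is arranged, everything else is a routine relativization argument, and undecidability follows from Proposition~\ref{prop:trakh}.
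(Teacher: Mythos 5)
Your overall architecture (relativize to a definable initial segment, replace $EqCard$ by the quasi-equicardinality formula $\varphi$, and reduce from an undecidable finite satisfiability problem) matches the paper's, but the way you guarantee that $\varphi$ actually computes $EqCard$ on the chosen segment has a genuine gap, and it is exactly the point you flag as ``the main obstacle''. You propose to enforce correctness of $\varphi$ on $[0,b)$ by a side condition of the form ``$|[0,b)|$ is at most a fixed fraction of $q$'' (or some other definable relation between $b$ and $q$). But the threshold $Q(k)=2k+\max\{x_{p,i}\}$ from the proof of Proposition~\ref{prop:quasieq} involves witnesses $x_{p,i}$ to the failure of STRUP-ness, and these can grow arbitrarily fast as a function of $k$; they depend only on $R$, as you note, but that is precisely the problem: no $\{<,R\}$-formula of the kind you describe can express ``$q\geq Q(b)$'', since $Q$ need not be bounded by any definable (or even computable) function of $b$. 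Your remark that ``we only need the conclusion for some $q$'' addresses only the forward direction of the reduction. For soundness you need the converse: whenever ${\mathcal M}_q\models\hat\theta$ with witness $b$, the simulated $+$ and $\times$ on $[0,b)$ must be the genuine ones. If $q<Q(b)$, item (2) of Definition~\ref{def:quasi} still gives $\I_q(\varphi)\subseteq\I_q(EqCard)$, but $\varphi$ may fail to relate some equal-cardinality subsets of $[0,b)$, so the simulated addition and multiplication are proper subrelations of the real ones; an arbitrary sentence $\theta$ (which may use negation) can then hold in this broken structure without being satisfiable over genuine finite arithmetic, producing a false positive.

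The missing idea --- and the paper's key device --- is to verify the correctness of $\varphi$ \emph{inside} the formula rather than by an external size condition. The paper's formula $GoodInitSeg(X)$ asserts that $X$ is a nonempty initial segment and that $\varphi(Y,Y)$ holds for every $Y\subseteq X$. Since $\varphi$ defines a cardinality relation contained in $EqCard$ (items (1) and (2) of Definition~\ref{def:quasi}), having the full diagonal up to $|X|$ forces $\I_q(\varphi)$ and $\I_q(EqCard)$ to coincide on all subsets of $X$, for the particular $q$ at hand. Item (3) is then needed only in the forward direction, to show that for any desired segment length some $q$ admits a good segment of that length. With this device your relativization argument goes through (whether you reduce from $MSO_{\{<,EqCard\}}$ as the paper does, or from $MSO_{\{+,\times\}}$ as you do, is immaterial); without it, the reduction is unsound.
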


\begin{proof}
We proceed by reduction from the satisfiability problem for $MSO_{\{<,EqCard\}}$, which is undecidable by Proposition \ref{prop:equi-undec-fini}.
Let $R \subseteq {\mathcal F}^n$ be a cardinality relation which is not definable in $(\N,<)$. By Proposition \ref{prop:quasieq} there exists a quasi-equicardinality relation $\varphi$ which is $\{<,R\}$-definable. Consider the $\{<,R\}$-formula $GoodInitSeg(X)$ defined as 
$$\exists z  \forall x (x \in X \leftrightarrow x \leq z) \wedge \forall Y \subseteq X  \ \varphi(Y,Y).$$
This formula expresses in every structure ${\mathcal M}_q$ that $X$, on the one hand, is a nonempty initial segment of $[0,q)$, and on the other hand that $(y,y) \in {\mathcal I}_q(\varphi)$ for every $y \leq |X|$. The latter property, combined with the fact that 
${\mathcal I}_q(\varphi) \subseteq {\mathcal I}_q(EqCard)$, ensures that 
$${\mathcal I}_q(\varphi) \cap (X \times X) = {\mathcal I}_q(EqCard) \cap (X \times X),$$
i.e. that $\varphi$ coincides with $EqCard$ for all subsets of $X$. 

Now let $\theta$ be the function which maps every  $\{<,EqCard\}$-sentence $G$ to the $\{<,R\}$-sentence $\theta(G)$ defined as:
$$\exists X \ (GoodInitSeg(X) \wedge {G^*})$$
where ${G^*}$ is obtained from $G$ by relativizing all quantifiers to $X$ and by replacing every occurrence of the predicate $EqCard$ by $\varphi$. 

Let us prove that $G$ is satisfiable iff $\theta(G)$ is. Assume first that there exists $k\geq 1$ such that $G$ holds in ${\mathcal M}_k$. The formula $\varphi$ defines a quasi-equicardinality relation, thus there exists $q \geq k$ such that in ${\mathcal M}_q$, the interpretation of $\varphi$ coincides with $EqCard$ on the initial segment $X=[0,k)$. This implies that $\theta(G)$ holds in ${\mathcal M}_q$, thus $\theta(G)$ is satisfiable. Conversely if $\theta(G)$ holds in some structure 
${\mathcal M}_q$, for some initial segment $X$ of size $k\leq q$, then it follows from the very construction of $\theta(G)$ that $G$ holds in ${\mathcal M}_k$, thus $G$ is satisfiable.
\end{proof}

 The proof of Theorem \ref{thm:indecfini} relies on the possibility to find in ${\mathcal M}_q$ an (arbitrarily great) initial segment of $[0,q)$ where $EqCard$ is definable, whence where $+$ and $\times$ are definable (by Proposition \ref{prop:equi-undec-fini}). This raises the question whether $+$ and $\times$ are $\{<,R\}$-finite-definable. The answer seems to depend on the choice of $R$, but we were not able to prove a general result. However, 
we exhibit below two examples of unary relations $R$ for which the answer is positive, namely when ${\mathcal I}(R)$ stands for the set of powers of $2$, and for the set of prime numbers. The study of these examples was suggested by Bruno Courcelle. For each of them, the strong arithmetical properties of $\I(R)$ allow to ``control" the initial segment on which $EqCard$ is definable, which enables to prove definability of $+$ and $\times$ in a direct way.

\begin{prop}
Let $R(X)$ be interpreted as ``$|X|$ is a power of $2$". The relation $\eqcard$, as well as the graphs of $+$ and $\times$, are $\{<,R\}$-finite-definable. 

\end{prop}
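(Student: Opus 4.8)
The plan is to prove first that $\eqcard$ is $\{<,R\}$-finite-definable; the graphs of $+$ and $\times$ then follow by substituting the $\eqcard$-formula into the defining formulas of Proposition~\ref{prop:equi-undec-fini}, which are uniform in $q$ and invoke $\eqcard$ only on subintervals of $[0,q)$. Throughout let $2^m$ be the largest power of $2$ with $2^m\le q$, so $q/2<2^m\le q$ (the small cases, where $m\le 2$, i.e.\ $q<8$, are disposed of by an explicit disjunction using the predicates $CARD_k$; below I assume $q$ large). The starting point is a \emph{small-set} equicardinality formula. Let $F(X,Y)$ be the formula $\forall Z\,(Z\cap(X\cup Y)=\emptyset\to(R(X\cup Z)\leftrightarrow R(Y\cup Z)))$ of Proposition~\ref{prop:unaire}. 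I claim that in ${\mathcal M}_q$ one has $F(X,Y)$ iff $|X|=|Y|$ \emph{provided} $\max(|X|,|Y|)\le q/4$: if, say, $|X|<|Y|$, take the least power of $2$, call it $2^k$, with $2^k>|Y|$; then $2^k\le 2|Y|\le q/2$ leaves enough room in $[0,q)\setminus(X\cup Y)$ for a set $Z$ disjoint from $X\cup Y$ with $|Z|=2^k-|X|$, and then $|X|+|Z|=2^k$ is a power of $2$ while $|Y|+|Z|=2^k+(|Y|-|X|)$ lies strictly between $2^k$ and $2^{k+1}$, so $F(X,Y)$ fails. Hence $\eqcard$ restricted to pairs of sets of size at most $q/4$ is $\{<,R\}$-finite-definable; in particular ``$|X|=z$'' is definable when $|X|,z\le q/4$, via $F(X,[0,z))$.

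To reach all sets I would build a $\{<,R\}$-definable ``ruler''. The predicates ``$|A|=2^m$'' (namely $R(A)\wedge\neg\exists A'(A\subsetneq A'\wedge R(A'))$) and ``$|A|=2^{m-1}$'' ($A$ is an $R$-set whose next larger $R$-cardinality is $2^m$) are $\{<,R\}$-definable by maximality alone, crucially \emph{without} appealing to $F$ at that scale. Since every power of $2$ that occurs as the size of a subset of $[0,q)$ is one of $1,\dots,2^{m-2},2^{m-1},2^m$, discarding the top two leaves only sizes $\le 2^{m-2}\le q/4$, where $F$ is reliable; combining this with maximality one obtains a $\{<,R\}$-definable predicate ``$|[a,b)|=2^{m-2}$'' for all $a\le b$ in $[0,q)$. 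From it I define, as the least set containing $0$ and closed under passing from $a$ to the $b$ with $|[a,b)|=2^{m-2}$, the grid $G$ of multiples of $2^{m-2}$ inside $[0,q)$; because $q/8<2^{m-2}\le q/4$, this grid cuts $[0,q)$ into some number $N\le 8$ of consecutive definable blocks $B_1,\dots,B_N$, each of length at most $2^{m-2}\le q/4$.

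Now, for $X\subseteq[0,q)$ with $|X|<q$, I would define the initial segment $[0,|X|)$ from $X$: assert the existence of cut points $0=t_0\le t_1\le\dots\le t_N$ with $t_i-t_{i-1}\le 2^{m-2}$ for every $i$ (expressible via the predicate ``$|[a,b)|=2^{m-2}$'') and $F([t_{i-1},t_i),\,X\cap B_i)$ for every $i$, and set $S=[0,t_N)$. Every set to which $F$ is applied then has size at most $2^{m-2}\le q/4$ -- the blocks by construction, the intervals $[t_{i-1},t_i)$ by the imposed bound -- so $F$ is reliable, and summing the block contributions forces $t_N=|X|$. Finally $\eqcard(X,Y)$ is ``$X=[0,q)\leftrightarrow Y=[0,q)$'' conjoined with ``if neither is all of $[0,q)$ then there is $S$ equal both to $[0,|X|)$ and to $[0,|Y|)$'' (plus the explicit disjunct for $q<8$); substituting this formula into the definitions of Proposition~\ref{prop:equi-undec-fini} yields the graphs of $+$ and $\times$.

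The step I expect to be the main obstacle is maintaining the invariant that $F$ is applied only to sets of size $\le q/4$: both the small-scale cardinality comparisons and the reconstruction of $[0,|X|)$ become routine once that invariant holds, but securing it forces one to produce the definable ruler -- the partition of $[0,q)$ into a bounded number of blocks of size $\le q/4$ -- by hand, and the only reason this is possible is the special arithmetic of powers of $2$: the domain size $q$ carries a $\{<,R\}$-definable binary scaffolding $2^m,2^{m-1},2^{m-2}$ whose two highest rungs, precisely where $F$ would be unreliable, are pinned down by purely maximality conditions.
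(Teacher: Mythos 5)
Your proof is correct, and the first half coincides with the paper's: both exploit the gap between consecutive powers of $2$ to show that the formula $F(X,Y)$ (equivalently, the quasi-equicardinality formula $\varphi$) decides equicardinality reliably for sets of size at most a constant fraction of $q$, by padding the smaller set up to a power of $2$ and observing that the larger one then lands strictly between two powers. Where you diverge is in globalizing from small sets to arbitrary subsets of $[0,q)$. The paper does this in one line: every subset of $[0,q)$ splits into $8$ pieces of size at most $2^p$, so $EqCard(X,Y)$ is defined by $\exists$ decompositions $X=\dot\cup_{i\le 8}X_i$, $Y=\dot\cup_{i\le 8}Y_i$ with $\varphi(X_i,Y_i)$ for all $i$; the point that makes this sound even when a witness decomposition uses \emph{large} pieces is that $\varphi$ (thanks to its minimality clauses, established in Proposition~\ref{prop:quasieq}) satisfies $\I_q(\varphi)\subseteq\I_q(EqCard)$ unconditionally. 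You instead work with the unminimized $F$, which is \emph{not} sound on large sets, and compensate by building a definable ruler: pin down $2^m,2^{m-1},2^{m-2}$ by maximality, cut $[0,q)$ into at most $8$ blocks of length $\le 2^{m-2}\le q/4$, reconstruct the initial segment $[0,|X|)$ block by block, and compare initial segments. This is correct (the invariant that $F$ is only ever applied to sets of size $\le q/4$ is maintained, including when the two arguments of $F$ overlap, since the disjointness bound on $Z$ still leaves $q/2$ elements free), but it is considerably heavier than necessary: the existentially quantified decomposition plus the one-sided soundness of $\varphi$ already does the job, and your ruler is essentially re-deriving by hand the control that the minimality clauses of $\varphi$ provide for free. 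On the other hand, your construction yields something slightly stronger as a by-product — a uniform definition of the map $X\mapsto[0,|X|)$ — which the paper never needs.
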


\begin{proof}
By Proposition \ref{prop:equi-undec-fini} it suffices to prove that $\eqcard$ is $\{<,R\}$-definable. 
Without loss of generality we can limit ourselves to structures with size $q \geq 4$.

We shall re-use some $\{<,R\}$-formulas introduced earlier. Recall that $F(X,Y)$ is the formula
$$ \forall Z ( Z \cap (X \cup Y)= \emptyset \rightarrow (R(X \cup Z) \leftrightarrow R(Y \cup Z))).$$
and that the $\{<,R\}$-formula $\varphi(X,Y):$
$$F(X,Y) \wedge \neg\exists X' (X' \subsetneq X \wedge F(X',Y)) \wedge \neg\exists Y' (Y' \subsetneq Y \wedge F(X,Y'))$$
defines a quasi-equicardinality relation by Proposition \ref{prop:quasieq} (case $n=1$).

Let $q \geq 4$, and let $p$ be such that $2^{p+2} \leq q < 2^{p+3}.$
We shall prove first that in ${\mathcal M}_q$, the relation $\varphi(X,Y)$ coincides with $EqCard$ for subsets with size at most $2^p$. It is sufficient to prove that $\varphi(X,Y)$ does not hold if $|X|<|Y|\leq 2^p$. Let $Z \subseteq [0,q)$ be disjoint from $X \cup Y$ and such that $|Z|=2^{p+1}-|Y|$. Such a set exists since 
$$|[0,q) \setminus (X \cup Y)| \geq q-(|X|+|Y|) \geq q-2^{p+1} \geq 2^{p+1}.$$
On the one hand, we have $|Y \cup Z|=2^{p+1}$ thus $R(Y \cup Z)$ holds. On the other hand,
$$|X\cup Z|=|X|+2^{p+1}-|Y|=2^{p+1}+(|X|-|Y|)$$ which implies 
$$2^p<|X\cup Z|<2^{p+1}$$
thus  $R(X \cup Z)$ does not hold.
It follows that $F(X,Y)$ does not hold, as well as $\varphi(X,Y)$.

Now every subset of $[0,q)$ can be written as the disjoint union of $2^3$ subsets of size at most $2^p$. This enables to define $EqCard(X,Y)$, by a formula which expresses that $X$ and $Y$ can be written as $X= \dot{\cup}_{1 \leq i \leq 8} X_i$ and $Y= \dot{\cup}_{1 \leq i \leq 8} Y_i$ where $\varphi(X_i,Y_i)$ holds for every $i \in [1,8]$. 
\end{proof}

Let us consider the second example.

\begin{prop}
Let $R(X)$ be interpreted as ``$|X|$ is a prime number". The relation $\eqcard$, as well as the graphs of $+$ and $\times$, are $\{<,R\}$-finite-definable. 
\end{prop}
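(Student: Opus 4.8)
The plan is to follow the template of the powers-of-$2$ case: first show that the quasi-equicardinality formula $\varphi$ of Proposition~\ref{prop:quasieq} (case $n=1$) already coincides with $EqCard$ on a segment of sizes that is a \emph{fixed fraction} of $q$, and then recover the full relation by cutting arbitrary sets into a bounded number of small pieces. Concretely, I would take $\varphi(X,Y)$ to be
$$F(X,Y)\wedge\neg\exists X'(X'\subsetneq X\wedge F(X',Y))\wedge\neg\exists Y'(Y'\subsetneq Y\wedge F(X,Y')),$$
with $F$ the unary formula already in use, and prove that there is an absolute constant $q_0$ such that for every $q\ge q_0$ the interpretation of $\varphi$ in $\mathcal{M}_q$ agrees with $EqCard$ for all subsets of size at most $\lfloor q/4\rfloor$.

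The inclusion $\I_q(\varphi)\subseteq\I_q(EqCard)$ holds for every $q$ by Proposition~\ref{prop:quasieq}(2), so the only thing to establish is that $\varphi(X,Y)$ holds whenever $|X|=|Y|\le\lfloor q/4\rfloor$. Re-reading the proof of item~$(3)$ of Proposition~\ref{prop:quasieq}, this reduces to a quantitative statement about the non-periodicity of the primes: for each period $p$ one must exhibit a \emph{witness} $x_p\ge k$ such that exactly one of $x_p,\,x_p+p$ is prime, and the threshold produced there is $Q=2k+\max_p x_p$ (in the case $n=1$ the tuple $\overline{x}_p$ is a single integer). Hence it suffices to show that, for all large $k$ and every $p\in[1,k]$, such a witness can be found in $[k,2k]$; this yields $Q\le 4k$ and therefore agreement up to $\lfloor q/4\rfloor$.

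The heart of the argument — and the step I expect to be the main obstacle — is this uniform witness bound: for every $d\in[1,k]$ there is a prime $x\in[k,2k]$ with $x+d$ composite. When $d$ is odd it is immediate from Bertrand's postulate, since any odd prime $x\in[k,2k]$ makes $x+d$ even and $>2$, hence composite. When $d$ is even parity is useless, and I would instead fix a small prime $\ell\nmid d$ (so $\ell=O(\log k)$, uniformly in $d\le k$) and seek a prime $x\in[k,2k]$ with $x\equiv -d\pmod{\ell}$; then $\ell\mid x+d$ and $x+d>\ell$, so $x+d$ is composite. The existence of such a prime for all large $k$ follows from the distribution of primes in residue classes to small moduli, equivalently from an elementary sieve bound showing that the pairs $x,x+d$ with both members prime are too sparse in $[k,2k]$ to exhaust the $\gg k/\log k$ primes there. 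Obtaining this \emph{uniformly} in $p\le k$ is the only genuinely number-theoretic point.

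Once agreement on sizes up to $\lfloor q/4\rfloor$ is secured, the rest is routine. Every $X\subseteq[0,q)$ satisfies $|X|\le q\le 5\lfloor q/4\rfloor$ (for $q\ge q_0$), so I would define $EqCard(X,Y)$ by the $\{<,R\}$-formula asserting the existence of partitions $X=X_1\,\dot\cup\,\cdots\,\dot\cup\,X_5$ and $Y=Y_1\,\dot\cup\,\cdots\,\dot\cup\,Y_5$ with $\varphi(X_i,Y_i)$ for every $i$. Soundness uses $\varphi\subseteq EqCard$ (each $|X_i|=|Y_i|$, whence $|X|=|Y|$), while completeness uses the agreement on small sizes: if $|X|=|Y|$ one splits both sets into matching pieces of size $\le\lfloor q/4\rfloor$, for which $\varphi(X_i,Y_i)$ holds. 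The finitely many small values $q<q_0$ are absorbed into a separate disjunct, since on a domain of fixed size every relation is trivially definable. Finally, $EqCard$ being $\{<,R\}$-finite-definable, the graphs of $+$ and $\times$ are as well by Proposition~\ref{prop:equi-undec-fini}.
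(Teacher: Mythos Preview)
Your proposal is correct, but it takes a genuinely different route from the paper.

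The paper does \emph{not} try to make $\varphi$ agree with $EqCard$ on all small sizes. Instead it restricts attention to \emph{prime} cardinalities and uses a neat mod-$3$ trick: if $3<p_1<p_2\le q/3$ are primes, take $Z$ disjoint from $X\cup Y$ with $|Z|=p_2-p_1$; then $|X\cup Z|=p_2$ is prime while $|Y\cup Z|=2p_2-p_1$ is divisible by $3$ (since $p_1,p_2,2p_2-p_1$ is an arithmetic progression and neither of the first two terms is a multiple of $3$), hence composite. This kills $F(X,Y)$ with an explicit, completely elementary witness. To pass from prime cardinalities to arbitrary cardinalities $\le q/3$, the paper invokes Ramar\'e's theorem that every integer $\ge 2$ is a sum of at most seven primes, and defines an auxiliary formula $H(X,Y)$ that decomposes $X,Y$ into at most seven prime-sized pieces (with $|X_i|,|Y_i|\in\{2,3\}$ handled separately). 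Finally four pieces of size $\le q/3$ cover any subset of $[0,q)$.

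Your approach instead aims for agreement on \emph{all} sizes $\le\lfloor q/4\rfloor$ by producing, for each shift $p\le k$, a prime $x\in[k,2k]$ with $x+p$ composite. The odd-$p$ case is indeed immediate from Bertrand; for even $p$ the cleanest justification is the one you mention second: the Selberg/Brun upper bound $\#\{x\in[k,2k]:x,x+p\text{ both prime}\}\ll \tfrac{k}{(\log k)^2}\prod_{\ell\mid p}\tfrac{\ell-1}{\ell-2}$, with an absolute implied constant, combined with Chebyshev's lower bound $\pi(2k)-\pi(k)\gg k/\log k$. Since the product is $\ll\log\log k$ uniformly for $p\le k$, this yields the required witness for all large $k$, hence $Q\le 4k$ and the claimed fraction $1/4$. (Your first suggestion, finding a prime in $[k,2k]$ in a prescribed residue class modulo some $\ell=O(\log k)$, is essentially Siegel--Walfisz and is less elementary; the sieve route is the right one to emphasise.)

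In short: the paper trades the analytic input (sieve) for an additive-number-theory citation (Ramar\'e) plus a one-line congruence argument; you do the opposite. Both are valid. Your version has the advantage of being self-contained modulo a standard sieve bound and of not needing the intermediate restriction to prime-sized pieces; the paper's version keeps the ``distinguishing'' step at the level of arithmetic modulo $3$ but outsources the reconstruction step to a deep theorem.
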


\begin{proof} 
By Proposition \ref{prop:equi-undec-fini} it suffices to prove that $\eqcard$ is $\{<,R\}$-definable.
Without loss of generality we can limit ourselves to structures with size $q \geq 6$.

We consider again the formula $\varphi(X,Y)$ used in the previous proof. Let  $q \geq 6$. We first prove that in ${\mathcal M}_q$, the interpretation of $\varphi(X,Y)$ and $EqCard(X,Y)$ coincide for subsets $X,Y$ such that $|X|$ and $|Y|$ are primes greater than $3$ and less than or equal to $\frac{q}{3}$.

It suffices to prove that for all subsets $X,Y$ of $[0,q)$ such that $|X|=p_1$ and $|Y|=p_2$ where $p_1,p_2$ are two primes such that $3<p_1 < p_2 \leq \frac{q}{3}$, the relation $\varphi(X,Y)$ does not hold. 

Let $Z$ be disjoint from $X \cup Y$ and such that $|Z|=p_2-p_1$. Such a set exists since $|[0,q) \setminus (X \cup Y)| \geq \frac{q}{3} $ by our assumption. 
On the one hand, we have $|X \cup Z|=p_2$, thus $R(X \cup Z)$ holds. On the other hand we have $ |Y \cup Z|=p_1+2(p_2-p_1)$. Now there is at least a multiple of 3 among the three integers $p_1$, $p_1+(p_2-p_1)$ and $p_1+2(p_2-p_1)$, and this cannot be the two first ones since they are primes greater than 3. It follows that $p_1+2(p_2-p_1)$ is a multiple of 3, thus is not prime since it is greater than 3. Therefore
$R(Y \cup Z)$ does not hold. Hence $F(X,Y)$ does not hold, as well as $\varphi(X,Y)$.

By \cite{Ramare95} every integer $k \geq 2$ can be written as a sum of at most $7$ primes. Thus we can define the restriction of $\eqcard$ to subsets of size less than or equal to $\frac{q}{3}$ by a formula $H(X,Y)$ which expresses that, either $|X|=|Y|=v$ with $v\leq 1$, or there exists $j\in [1,7]$ such that $X$ and $Y$ can be written as $X= \dot{\cup}_{1 \leq i \leq j} X_i$ and $Y= \dot{\cup}_{1 \leq i \leq j} Y_i$ where the formula $$R(X_i) \wedge R(Y_i) \wedge (\varphi(X_i,Y_i) \vee \bigvee_{2 \leq k \leq 3}(Card_k(X_i) \wedge Card_k(Y_i)))$$
 holds for every for every $i \in [1,j]$.

Now we have $q\geq 6$ thus $4 \lfloor \frac{q}{3} \rfloor \geq q$, hence each subset of $[0,q)$ can be written as the disjoint union of 4 subsets of size less than or equal to $\frac{q}{3}$. This allows to define $EqCard(X,Y)$ by a formula which expresses that $X$ and $Y$ can be written as $X= \dot{\cup}_{1 \leq i \leq 4} X_i$ and $Y= \dot{\cup}_{1 \leq i \leq 4} Y_i$ where $H(X_i,Y_i)$ holds for every $i \in [1,4]$.
\end{proof}

\section{Open problems}

As discussed in Remark \ref{rem:FO}, one can deduce from Theorem \ref{thm:indeci} that for every cardinality relation $R$ such that $cod(R)$ is not FO-definable in $(\N,+,V_2)$, the function $\times$ is FO-definable in $(\N,+,V_2,cod(R))$, which yields undecidability of the theory. We noticed that this holds for instance for the structure  $(\N,+,V_2,EqNonZeroBits)$ where $EqNonZeroBits(x,y)$ holds if $x$ and $y$ have the same number of non-zero bits. In this particular case, and more generally when $R$ is unary, we can actually prove that $(\N,+,cod(R))$ already suffices to FO-define $\times$ and get undecidability. We do not know whether this still holds for any cardinality relation $R$.  More generally it would be interesting to study the expressive power of fragments of FO arithmetic which include predicates like $EqNonZeroBits$.

Another series of questions is related to our Michaux-Villemaire-like Theorem \ref{thm:MV}. The result is stated for cardinality relations. Does it hold for any predicate $R$ ? More generally, which structures enjoy a similar property, and can we characterize them in model-theoretic terms ? If we consider FO logic,  besides $(\N,+)$, one can show easily that the property holds e.g. for $(\N,+,\times)$ and $(\N,+,x \mapsto 2^x)$, while it does not hold for $\mbox{FO}(\N,x \mapsto x+1)$ (see \cite[Section 2.6]{bes02a}). Recently Arthur Milchior \cite{Milchior13} proved a variant of Michaux-Villemaire theorem for $\mbox{FO}(\N,<,\{x \equiv y \pmod k\}_{k \geq 2})$
and use this to specify the decidability frontier for satisfiability of FO logics over words which lay between $\mbox{FO}_{\{<\}}$ and $\mbox{FO}_{\{+\}}$.

\section*{Acknowledgements.} 
We wish to thank
\begin{itemize} 
\item Bruno Courcelle for raising questions which led to this work, and for his help during the preparation of the paper;
\item Arthur Milchior for his careful reading of a preliminary version of the paper;
\item the organizers of the 30th Days of Weak Arithmetics held in 2011 at Paris 7 University, where we learned about Bruno Courcelle's question;   
\item the anonymous referee who suggested to use the formula (\ref{eq:FF}) in the proof of Theorem \ref{thm:MV} and Proposition \ref{prop:quasieq}, which led to a great simplification of the initial proofs.
\end{itemize}

\bibliographystyle{plain}
\bibliography{msocard}

\end{document}